\newtheorem{Theo}{Theorem}
\newtheorem{lemma}{Lemma}
\newtheorem{definition}{Definition}
\newcommand{\vectornorm}[1]{\left|\left|#1\right|\right|}
\DeclareMathAlphabet{\mathpzc}{OT1}{pzc}{m}{it}
\renewcommand{\baselinestretch}{1.9}
\def\calW{{\mathcal W}}
\def\bv{{\mathbf v}}
\def\bS{{\mathbf S}}
\def\tr{{\rm tr}}
\def\E{{\mathbf E}}
\def\b0{{\mathbf 0}}
\begin{document}

% paper title
\title{On Secrecy Rate of the Generalized Artificial-Noise Assisted Secure Beamforming for Wiretap Channels}

\author{
\authorblockN{ \vspace{-4mm} Pin-Hsun Lin \hspace{0.5cm}
 Szu-Hsiang Lai \hspace{0.5cm} Shih-Chun Lin \hspace{0.5cm} Hsuan-Jung Su}
%%%%%%%%%%%%%%%%%%%%%%%%%%%%%%%%%%%%%%%%%%%%%%%%%%%%%%%%%%%%%%%%%%%%%%%%%%%%%%
%Footnotes (Thanks finical support)
%%%%%%%%%%%%%%%%%%%%%%%%%%%%%%%%%%%%%%%%%%%%%%%%%%%%%%%%%%%%%%%%%%%%%%%%%%%%%%

\thanks{
Pin-Hsun Lin, and Hsuan-Jung Su are with Department of Electrical
Engineering and Graduate Institute of Communication Engineering,
National Taiwan University, Taipei, Taiwan 10617. Szu-Hsiang Lai is with MStar semiconductor, Hsinchu, Taiwan 114. Shih-Chun Lin
is with Graduate Institute of Computer and Communication Engineering,
National Taipei University of Technology, Taipei, Taiwan 10643. Emails: \{r98942061@ntu.edu.tw, pinhsunlin@gmail.com, sclin@ntut.edu.tw,
hjsu@cc.ee.ntu.edu.tw\}. The material in
this paper was presented in part at The 22nd IEEE Symposium on Personal, Indoor, Mobile and Radio Communications (PIMRC 2011).
 This work was supported by
the National Science Council, Taiwan, R.O.C., under grant NSC 99-2628-E-002-001 and 100-2221-E-002-133. } }
% make the title area
\maketitle \thispagestyle{empty} \vspace{-15mm}
{\renewcommand{\baselinestretch}{2}
\begin{abstract}
In this paper we consider the secure transmission in fast Rayleigh fading channels
with full knowledge of the main channel and only the statistics of the eavesdropper's channel state
information at the transmitter. For the multiple-input, single-output, single-antenna eavesdropper
systems, we generalize Goel and Negi's celebrated
artificial-noise (AN) assisted beamforming, which just selects the directions to transmit AN heuristically.
Our scheme may inject AN to the direction of the message, which outperforms Goel and Negi's scheme
where AN is only injected in the directions orthogonal to the main channel. The ergodic secrecy rate of the
proposed AN scheme can be represented by a highly simplified power allocation problem. To attain
it, we prove that the optimal transmission scheme for the message bearing signal is a beamformer,
which is aligned to the direction of the legitimate channel. After characterizing the optimal eigenvectors
of the covariance matrices of signal and AN, we also provide the necessary condition for transmitting
AN in the main channel to be optimal. Since the resulting secrecy rate is a non-convex power allocation
problem, we develop an algorithm to efficiently solve it. Simulation
results show that our generalized AN scheme outperforms Goel and Negi's,
especially when the quality of legitimate channel is much worse than that of eavesdropper's. In particular, the
regime with non-zero secrecy rate is enlarged, which can
significantly improve the connectivity of the secure network when the proposed AN
assisted beamforming is applied.
\end{abstract}

\section{ Introduction}
In a wiretap channel, a source node wishes to transmit
confidential messages securely to a legitimate receiver and to
keep the eavesdropper as ignorant of the message as possible. As a
special case of the broadcast channels with confidential messages
\cite{csiszar1978broadcast}, Wyner \cite{Wyner_wiretap}
characterized the secrecy capacity of the discrete memoryless
wiretap channel. The secrecy capacity is the largest rate
communicated between the source and destination nodes with the
eavesdropper knowing no information of the messages. Motivated by
the demand of high data rate transmission and improving the
connectivity of the network \cite{Secureconnect}, the multiple
antenna systems with security concern are considered by several
authors. With full channel state information at the transmitter
(CSIT), Shafiee and Ulukus \cite{Shafiee_secrecy_2_2_1_J} first
proved the secrecy capacity of a Gaussian channel with two-input,
two-output, single-antenna-eavesdropper. Then the authors of
\cite{Khisti_MIMOME,Oggier_MIMOME, Liu_MIMO_wiretap} extended the
secrecy capacity to the Gaussian multiple-input multiple-output
(MIMO), multiple-antenna-eavesdropper channel using different
techniques. On the other hand, due to the characteristics of
wireless channels, the impacts of fading channels on the secrecy
transmission were considered in \cite{Liang_fading_secrecy,
Khisti_MIMOME} with full CSIT. Considering practical issues such
as the limited bandwidth of the feedback channels or the speed of
the channel estimation at the receiver, the perfect CSIT may not
be available. Therefore, several works considered the secrecy
transmission with partial CSIT
\cite{Goel_AN,gopala2008secrecy,Khisti_MISOME,Li_fading_secrecy_j,Pulu_ergodic}.
In \cite{Goel_AN,gopala2008secrecy,Khisti_MISOME}, the authors
naively chose the directions of signal and AN without optimization
and the resulting performance is suboptimal. In addition, they
solved the power allocation via full search, which is inefficient.
Furthermore, they did not prove the equality of the power
constraint is hold (using all power is optimal). In
\cite{Li_fading_secrecy_j}, a single antenna system is considered,
thus the authors did not solve the beamformer and power allocation
problems. Also, the authors did not prove the rate increases with
increasing total power. In \cite{Pulu_ergodic}, the authors did
not consider the AN in the transmission, and thus their scheme is
a special case of ours. Indeed, as shown in
\cite{Goel_AN,Khisti_MISOME}, adding AN in transmission is crucial
in increasing the secrecy rate in fading wiretap channels. Also
under the case that the main channel is fully known at
transmitter, the optimal direction for signals is not solved
analytically in \cite{Pulu_ergodic}. However, the secrecy
capacities for channels with partial CSIT are known only for some
limited cases, i.e., the transmitter has single antenna with block
fading \cite{gopala2008secrecy} and only the statistics of both
the main and eavesdropper's channels are known at the transmitter
\cite{Lin_Ergodic_secrecy_capacity}.

In this paper, we consider an important type of wiretap channels
with partial CSIT, namely, the multiple-input single-output
single-antenna-eavesdropper (MISOSE) fading wiretap
channels. We assume that the main channel has a constant
channel gain and the eavesdropper channel is fast faded,
respectively. We also assume that the transmitter has perfect knowledge
of the main channel and only the statistics of the eavesdropper
channel. We adopt the artificial noise (AN) assisted secure
beamforming as our transmission scheme, where the AN is used to
disrupt the eavesdropper's reception
\cite{Khisti_MISOME}\cite{Goel_AN}. Although the secrecy
capacity of the considered channel is unknown, the performance of
the AN-assisted beamforming has been shown to be
capacity-achieving in the high signal to noise ratio (SNR) regime when the transmitter is
equipped with a large number of antennas \cite{Khisti_MISOME}.
However, in other operation regimes, the heuristically selected
directions in \cite{Khisti_MISOME}\cite{Goel_AN} to transmit AN
may not be optimal, where the AN is restricted to be in the null
space of the legitimate channel. This motivates our study on
optimizing the AN assisted secure beamforming.
Note that the assumption that the statistics of the eavesdropper's channel are known
at transmitter was also used in \cite{Goel_AN} to design the power allocation between
the signal and the AN (see \cite[(8)]{Goel_AN}). Thus our comparison to the method in
\cite{Goel_AN} in Section \ref{Sec_simulation} is reasonable and fair.

The main contribution of our paper is that we propose a general AN scheme, which outperforms \cite{Goel_AN}. More specifically, the optimal AN may be full rank under some channel conditions rather than low rank, as restricted in \cite{Goel_AN}. In addition, we provide a simplified power allocation problem to describe the ergodic secrecy rate, which highly reduces the complexity of solving the rate. To attain it, we characterize the optimal beamforming directions and the power allocation strategies for AN. We also provide the necessary
condition for transmitting AN in the main channel to be optimal. After characterizing the eigenvectors of the covariance matrices of signal and AN, the resulting rate becomes a non-convex power allocation problem and we develop an algorithm to efficiently solve it. Simulation results confirm
that the full-rank AN provides rate gains over \cite{Goel_AN}, especially through the enlarged non-zero rate region.
Note that the secure connectivity in a network is assured by the
non-zero secrecy rate of the transmitter-receiver pairs
\cite{Secureconnect}. Thus our scheme is very useful for the
large scale wireless network applications, which is an important type of
applications of the MISOSE wiretap channels \cite{Secureconnect}.

The rest of the paper is organized as follows. In Section
\ref{Sec_system_model} we introduce the considered system model. In
Section \ref{Sec_general_P_optimal_input_cov_LB} an intuitive explanation of
the rate gain from the proposed scheme is provided. We then develop our main result, i.e., the ergodic secrecy rate, via three steps. In this section we also provide the necessary condition to have a full rank optimal covariance matrix of AN. In Section \ref{Sec_proposed algorithm}, we provide an iterative algorithm to solve the power allocation problem. In Section \ref{Sec_simulation} we
demonstrate the simulation results. Finally, Section
\ref{Sec_conclusion} concludes this paper.

\section{System model}\label{Sec_system_model}
In this paper, lower and upper case bold alphabets denote vectors and matrices, respectively. The superscript $(.)^H$ denotes the transpose complex
conjugate. $|\mathbf{A}|$ and $|a|$ represent the determinant of the
square matrix $\mathbf{A}$ and the absolute value of the scalar
variable $a$, respectively. A diagonal matrix whose diagonal
entries are $a_1 \ldots a_k$ is denoted by
$diag(a_1 \ldots a_k)$. The trace of $\mathbf{A}$
is denoted by $\tr(\mathbf{A})$. We define $C(x) \triangleq
\log(1+x)$ and $(x)^+\triangleq\max\{0,\,x\}$. $\mathbf{A}^{\perp}$ is the null space of $\mathbf{A}$. The mutual
information between two random variables is denoted by $I(;)$. $\mathbf{I}_n$ denotes the
$n$ by $n$ identity matrix. $\mathbf{A}\succ 0$ and $\mathbf{A}\succeq 0$ denote
that $\mathbf{A}$ is a positive definite and positive semi-definite matrix, respectively. $\mathbf{a}\succ \mathbf{b}$ denotes $\mathbf{a}$ majorizes $\mathbf{b}$.

We consider the MISOSE system as shown in Fig. \ref{Fig_system},
where the transmitter (Alice) has $n_T$ antennas and the legitimate receiver (Bob) and the eavesdropper (Eve) each has single antenna. The received
signals at Bob and Eve can be respectively represented as
\begin{align}
y_k&=\mathbf{h}^H\mathbf{x}_k+n_{1,k}, \label{EQ_main}\\
z_k&=\mathbf{g}^H_k\mathbf{x}_k+n_{2,k}\label{EQ_Eve},
\end{align}
where $\mathbf{x}_k\in\mathds{C}^{n_T\times 1}$ is the transmit vector, $k$ is the time
index, $\mathbf{h}$ is the constant main channel vector,
$\mathbf{g}_k\sim CN(0, \mathbf{I}_{n_T})$ is the random eavesdropper's channel, and $n_{1,k}$ and
$n_{2,k}$ are circularly symmetric complex additive white Gaussian
noises with variances one at Bob and Eve, respectively. In this system model, we assume that full
CSI of the legitimate channel and only the statistics of Eve's
channel are known at transmitter. Without loss of generality, in the following we omit the time index to simplify the notation.

The perfect secrecy and secrecy capacity are defined as follows.
Consider a $(2^{nR}, n)$-code with an encoder that maps the
message $w\in \calW=\{1,2,\ldots, 2^{nR}\}$ into a length-$n$
codeword, and a decoder at the legitimate receiver that maps the
received sequence $y^n$ (the collections of $y$ over code length
$n$) from the MISOSE channels \eqref{EQ_main} to an estimated
message $\hat w\in\calW$. We then have the following definition of secrecy capacity.
\begin{definition}[Secrecy Capacity
\cite{gopala2008secrecy}] \label{Def_Perfect} {\it Perfect secrecy
is achievable with rate $R$ if, for any positive $\varepsilon$ and $\varepsilon'$, there
exists a sequence of $(2^{nR}, n)$-codes and an integer $n_0$ such
that for any $n>n_0$
\begin{equation}\label{eq_equivocation_given_h}
I(w;z^n,\mathbf{h}^n, \mathbf{g}^n)/n<\varepsilon, \mathrm{and
}\;\;{\rm Pr}(\hat{w}\neq w) \leq \varepsilon',
\end{equation}
where $w$ is the secret message, $z^n$, $\mathbf{h}^n$, and
$\mathbf{g}^n$ are the collections of $z, \,\,\mathbf{h}$, and
$\mathbf{g}$ over code length $n$, respectively. The {\rm secrecy capacity} $C_s$ is the supremum of all achievable
secrecy rates.}
\end{definition}

From Csisz$\acute{\mbox{a}}$r and
K$\ddot{\mbox{o}}$rner's argument \cite{csiszar1978broadcast}, we
know that the general secrecy capacity can be represented by
\begin{align}
C=\underset{p(\mathbf{x}|\mathbf{u}),\,p(\mathbf{u}) }{\max}I(\mathbf{u};y)-I(\mathbf{u};z|\mathbf{g}).\label{EQ_partial_CSI_rate}
\end{align}

However, for our considered CSIT setting, which is not full CSIT, the optimal $p(\mathbf{x}|\mathbf{u})$ and $p(\mathbf{u})$ are still unknown. We propose to apply the linear channel prefixing and Gaussian signaling to $f(x|u)$ as
\begin{equation}\label{EQ_prefixing}
\mathbf{x}=\mathbf{u}+\mathbf{v},
\end{equation}
where $\mathbf{u}\sim CN(0,\mathbf{S_u})$ and $\mathbf{v}\sim
CN(0,\mathbf{S_v})$ are independent vectors to convey the message
and AN, respectively. In addition, the feasible channel input
matrices of signal and AN belong to the set
\begin{align}
 \textsl{S}=\{(\mathbf{S_u},\mathbf{S_v}):\mbox{tr}&(\mathbf{S_u}+\mathbf{S_v})\leq P_T, \mathbf{S_u} \succeq 0, \mathbf{S_v} \succeq
 0\}.
\label{EQ_power_constraint}
\end{align}

%Since Goel and Negi's method is a special case of \eqref{EQ_prefixing}, theirs is suboptimal to ours.
Substituting \eqref{EQ_main}, \eqref{EQ_Eve}, and
\eqref{EQ_prefixing} into \eqref{EQ_partial_CSI_rate}, we have the ergodic
secrecy rate with generalized AN (GAN) as
\begin{align}\label{EQ_general_rate_AN}
R_{GAN}=\max_{ \mathbf{S_u},\,\mathbf{S_v}\in
\textsl{S}}\left(\log
\left(\frac{1+\mathbf{h}^H\left(\mathbf{S_u}+\mathbf{S_v}\right)\mathbf{h}}{1+\mathbf{h}^H\mathbf{S_v}\mathbf{h}}\right)
-\E\left[\log
\left(\frac{1+\mathbf{g}^H\left(\mathbf{S_u}+\mathbf{S_v}\right)\mathbf{g}}{1+\mathbf{g}^H\mathbf{S_v}\mathbf{g}}\right)\right]\right)^+.
\end{align}

Note that we do not limit the covariance matrix $\mathbf{S_v}$ of
the AN $\bv$ to have any special structure besides the
conventional one \eqref{EQ_power_constraint}. Thus our GAN scheme
generalizes the AN in \cite{Goel_AN}, which is \textit{only}
allowed to be transmitted in the null space of the main channel.
On the contrary, our GAN can be transmitted in all possible
directions. We then solve the ergodic secrecy rate optimization
problem \eqref{EQ_general_rate_AN} for the proposed GAN
beamforming (GAN-BF) scheme in the following sections.

\section{Optimization of the ergodic secrecy rate}\label{Sec_general_P_optimal_input_cov_LB}
In this section, we identify the structure of the optimal
solutions $\mathbf{S^*_u}$ and $\mathbf{S^*_v}$ for the GAN-BF
optimization problem \eqref{EQ_general_rate_AN}, where AN is not
restricted in the null space of the main channel. By exploiting
the optimal structure, we transform the complicated optimization
problem over the covariance matrices \eqref{EQ_general_rate_AN} as
a much simpler one in Theorem \ref{Th_main_result}. In the
following Theorem \ref{Th_main_result}, the optimized ergodic
secrecy rate of the GAN-BF is merely characterized by the power
allocations among the message bearing signal, AN in the direction
of the main channel, and AN in the directions orthogonal to the
main channel.

\begin{Theo}\label{Th_main_result}
For the MISOSE fast fading wiretap channel with the perfect
information of the legitimate channel $\mathbf{h}$, and only the
statistics of the eavesdropper's channel
$\mathbf{g}\sim CN(0, \mathbf{I}_{n_T})$ known at the transmitter, the optimization of the secrecy rate in \eqref{EQ_general_rate_AN} can be reduced to the following optimization problem
\begin{align}\label{EQ_simplified_obj}
R_{GAN}=\underset{\overset{P_{U},\,P_{V_1},\,P_{V_2}:}{P_U+P_{V_1}+(n_T-1)P_{V_2}{ =} P_T}}\max\left(\log\left(1+\frac{||\mathbf{h}||^2P_U}{1+||\mathbf{h}||^2P_{V_1}}\right)-\E\left[\log\left(1+\frac{\tilde{G}_1P_U}{1+\tilde{G}_1P_{V_1}+\left(\underset{i=2}{\overset{n_T}\sum}\tilde{G}_i \right)P_{V_2}}\right)\right]\right)^+,
\end{align}
where $P_U,\,P_{V_1},\mbox{ and, }P_{V_2}$ are the powers of the signal, the AN in the main channel, and the AN in the null space of the main channel, respectively. $\tilde{G}_i\triangleq|g_i|^2\sim EXP(1)$, which is the exponential distribution with mean equal to 1, for
$i=1,2,\ldots,n_T$.
\end{Theo}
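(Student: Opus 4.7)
The plan is to show that the optimizer $(\bS_u^*,\bS_v^*)$ of \eqref{EQ_general_rate_AN} can be taken in the form
\[
\bS_u^* = P_U\hat{\bh}\hat{\bh}^H,\qquad \bS_v^* = P_{V_1}\hat{\bh}\hat{\bh}^H + P_{V_2}\bigl(\bI_{n_T}-\hat{\bh}\hat{\bh}^H\bigr),
\]
with $\hat{\bh}:=\bh/\|\bh\|$. Once this structure is established, substitution into \eqref{EQ_general_rate_AN} gives \eqref{EQ_simplified_obj} directly: in the basis where $\hat{\bh}=\be_1$, one has $\bh^H\bS_u^*\bh=\|\bh\|^2P_U$, $\bh^H\bS_v^*\bh=\|\bh\|^2P_{V_1}$, $\bg^H\bS_u^*\bg=P_U\tilde G_1$, and $\bg^H\bS_v^*\bg=P_{V_1}\tilde G_1+P_{V_2}\sum_{i=2}^{n_T}\tilde G_i$. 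The central analytic tool I would use throughout is the Schur-concavity of $\Phi(\lambda):=\E\bigl[\log\bigl(1+\sum_i\lambda_i\tilde G_i\bigr)\bigr]$ on $\{\lambda_i\geq 0\}$ at fixed $\sum_i\lambda_i$, which follows from the pairwise second derivative $\partial^2_t\Phi\big|_{\lambda_i\to\lambda_i+t,\lambda_j\to\lambda_j-t}=-\E\bigl[(\tilde G_i-\tilde G_j)^2/(1+\sum_k\lambda_k\tilde G_k)^2\bigr]\leq 0$.

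The first main step would fix $\bS_u=P_U\hat{\bh}\hat{\bh}^H$ tentatively and pin down the structure of $\bS_v^*$. Since $\bh^H\bS_v\bh=\|\bh\|^2[\bS_v]_{11}$, the main rate depends only on $P_{V_1}:=[\bS_v^*]_{11}$, so it remains to minimize the Eve rate $\E[\log((1+A)/(1+B))]$, with $A=\bg^H(\bS_u+\bS_v)\bg$, $B=\bg^H\bS_v\bg$, and $A-B=P_U\tilde G_1$, over $[\bS_v]_{12}$ and the eigenvalues $(\mu_2,\ldots,\mu_{n_T})$ of $[\bS_v]_{22}$ satisfying $\sum_{i=2}^{n_T}\mu_i=(n_T-1)P_{V_2}$. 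A direct second-derivative calculation along the pairwise perturbation $\mu_i\to\mu_i+\delta,\mu_j\to\mu_j-\delta$ yields
\[
\partial^2_\delta R_{\text{Eve}}=\E\!\left[\frac{(\tilde G_i-\tilde G_j)^2 P_U\tilde G_1(2+A+B)}{[(1+A)(1+B)]^2}\right]>0,
\]
so $R_{\text{Eve}}$ is strictly convex on every pairwise line; combined with its permutation symmetry in $(\mu_2,\ldots,\mu_{n_T})$ and the compactness of the simplex, iterative pairwise smoothing forces $\mu_2=\cdots=\mu_{n_T}=P_{V_2}$. An analogous isotropy/phase-averaging argument together with the PSD constraint zeroes the off-diagonal $[\bS_v^*]_{12}=\b0$.

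The second main step would freeze $\bS_v^*$ in the derived structure and show $\bS_u^*=P_U\hat{\bh}\hat{\bh}^H$. By the stabilizer-group symmetry $G=\{\bU:\bU\hat{\bh}=\hat{\bh}\}$ and the same isotropy averaging applied to $\bS_u$, I may write $\bS_u=a_1\hat{\bh}\hat{\bh}^H+a_2(\bI_{n_T}-\hat{\bh}\hat{\bh}^H)$ with $a_1+(n_T-1)a_2=P_U$, so the problem reduces to a scalar optimization in $a_1\in[0,P_U]$. The main-rate derivative equals $\|\bh\|^2/(1+\|\bh\|^2(a_1+P_{V_1}))>0$, and writing the Eve-rate derivative after the substitution $S=\sum_{i=2}^{n_T}\tilde G_i$ in a form similar to the one above, $dR_{GAN}/da_1$ admits a positive-integrand representation on $[0,P_U]$; the maximum is therefore attained at the right endpoint $a_1=P_U$, which gives $\bS_u^*=P_U\hat{\bh}\hat{\bh}^H$.

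Finally, to see that the power constraint binds, suppose $P_U+P_{V_1}+(n_T-1)P_{V_2}<P_T$ and increase $P_{V_2}$ by $\varepsilon>0$: the main rate is unchanged because the added AN is supported on $\hat{\bh}^\perp$, while both $A$ and $B$ grow by $\varepsilon\sum_{i=2}^{n_T}\tilde G_i$, strictly shrinking $\log((1+A)/(1+B))$ pointwise and hence strictly increasing $R_{GAN}$. Thus $P_U+P_{V_1}+(n_T-1)P_{V_2}=P_T$. I expect the rank-one argument of the second step to be the most delicate part: when the eigenvalue $a_2+P_{V_2}$ on $\hat{\bh}^\perp$ dominates $a_1+P_{V_1}$ on $\hat{\bh}$, the sorted eigenvalues of $\bS_u+\bS_v^*$ at a general $a_1$ need not be majorized by those at $a_1=P_U$, so a naive Schur-concavity comparison fails and the monotonicity of $R_{GAN}$ in $a_1$ must be verified via a derivative analysis that carefully tracks the switch in the ordering of the eigenvalues.
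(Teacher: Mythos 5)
Your proposal contains two genuine gaps. The first is structural: Step~1 derives the form of $\mathbf{S_v^*}$ only after fixing $\mathbf{S_u}=P_U\hat{\mathbf{h}}\hat{\mathbf{h}}^H$ (the permutation symmetry in $(\mu_2,\dots,\mu_{n_T})$ that your pairwise smoothing needs holds only when $\mathbf{S_u}$ restricted to $\hat{\mathbf{h}}^{\perp}$ is itself isotropic), while Step~2 derives the beamforming form of $\mathbf{S_u^*}$ only after freezing $\mathbf{S_v^*}$ in the form produced by Step~1. Each step assumes the other's conclusion, so together they certify only a fixed point of an alternating optimization, not a maximizer of the joint problem \eqref{EQ_general_rate_AN}. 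The paper avoids this circularity by first establishing the common eigenvector structure for the \emph{joint} problem (Lemma~\ref{Th_U_eq_V}: by isotropy of $\mathbf{g}$ the Eve term depends only on the eigenvalues of $\mathbf{S_u}+\mathbf{S_v}$ and $\mathbf{S_v}$, leaving the eigenvectors free to optimize the main term), and then extracting the rank-one property of $\mathbf{S_u^*}$ from the KKT conditions of the joint problem together with the interlacing theorem (Theorem~\ref{Th_optimal_BF}).

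The second gap is that the symmetrization in Step~2 goes the wrong way. For $\mathbf{U}$ in the stabilizer of $\hat{\mathbf{h}}$ and $\mathbf{S_v}$ of the derived (stabilizer-invariant) form, isotropy gives $\E[\log(1+\mathbf{g}^H(\mathbf{U}\mathbf{S_u}\mathbf{U}^H+\mathbf{S_v})\mathbf{g})]=\E[\log(1+\mathbf{g}^H(\mathbf{S_u}+\mathbf{S_v})\mathbf{g})]$, and since $\mathbf{X}\mapsto\E[\log(1+\mathbf{g}^H\mathbf{X}\mathbf{g})]$ is concave, Jensen applied to the group average $\bar{\mathbf{S}}_{\mathbf{u}}$ yields $\E[\log(1+\mathbf{g}^H(\bar{\mathbf{S}}_{\mathbf{u}}+\mathbf{S_v})\mathbf{g})]\geq\E[\log(1+\mathbf{g}^H(\mathbf{S_u}+\mathbf{S_v})\mathbf{g})]$. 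This term enters the secrecy rate with a minus sign and the main term is unchanged, so averaging can only decrease the objective; you therefore cannot restrict $\mathbf{S_u}$ to the family $a_1\hat{\mathbf{h}}\hat{\mathbf{h}}^H+a_2(\mathbf{I}_{n_T}-\hat{\mathbf{h}}\hat{\mathbf{h}}^H)$ without loss. Indeed the true optimum is a rank-one extreme point---exactly what symmetrization destroys---and the asserted ``positive-integrand representation'' of $dR_{GAN}/da_1$ is never computed; you flag this yourself as the delicate part. On the positive side, your pairwise second-derivative computation showing that $\E[\log((1+A)/(1+B))]$ is convex along $\mu_i\to\mu_i+\delta$, $\mu_j\to\mu_j-\delta$ is correct and gives a clean alternative to the paper's Laplace-transform-order and majorization proof of Lemma~\ref{Th_AN_null_space_uniform}, and your argument that the power constraint binds matches Lemma~\ref{Lemma_equality_power_constraint}.
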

Comparing \eqref{EQ_general_rate_AN} to \eqref{EQ_simplified_obj} we
can easily find that the optimization problem is vastly simplified
from solving two matrices to three scalar variables. Note that we
divide the proof of Theorem \ref{Th_main_result} into three parts
for the tractability and each part corresponds to Theorem 2, Lemma
3, and Lemma 4, respectively. Before proving
(\ref{EQ_simplified_obj}), we introduce two important lemmas to
proceed.

\begin{lemma}\label{lemma_optimal_direction}
Given a diagonal matrix $\mathbf{D}=diag(d_1,\,d_2,\,\cdots,\,d_n)\in\mathds{C}^{n\times n}$. Assume $d_1\geq d_2 \geq \cdots \geq d_n$ and $\mathbf{U}$ is unitary. Then $\mathbf{U}=[\mathbf{h}/||\mathbf{h}||,\,\mathbf{h}^{\perp}/||\mathbf{h}||]$ and $\mathbf{U}=[\mathbf{h}^{\perp}/||\mathbf{h}||,\,\mathbf{h}/||\mathbf{h}||]$ maximizes and minimizes $\mathbf{h}^H\mathbf{U}\mathbf{D}\mathbf{U}^H\mathbf{h}$, respectively.
\end{lemma}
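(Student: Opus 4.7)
The plan is to reduce the quadratic form to a weighted sum over the eigenvalues of $\mathbf{U}\mathbf{D}\mathbf{U}^H$ and then solve the resulting trivial linear program over the nonnegative coefficients. Writing the columns of $\mathbf{U}$ as $\mathbf{u}_1,\ldots,\mathbf{u}_n$, I would expand $\mathbf{U}\mathbf{D}\mathbf{U}^H=\sum_{i=1}^n d_i\,\mathbf{u}_i\mathbf{u}_i^H$ and obtain
\[
\mathbf{h}^H\mathbf{U}\mathbf{D}\mathbf{U}^H\mathbf{h}=\sum_{i=1}^n d_i\,b_i,\qquad b_i\triangleq|\mathbf{u}_i^H\mathbf{h}|^2\geq 0.
\]
The unitarity $\mathbf{U}\mathbf{U}^H=\mathbf{I}_n$ then yields the single linear constraint $\sum_{i=1}^n b_i=\mathbf{h}^H\mathbf{I}_n\mathbf{h}=\|\mathbf{h}\|^2$, turning the optimization into a linear program on the simplex $\{b_i\geq 0,\;\sum_i b_i=\|\mathbf{h}\|^2\}$.

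Next I would exploit the ordering $d_1\geq d_2\geq\cdots\geq d_n$. Since a linear objective over a simplex attains its extrema at vertices, $\sum_i d_i b_i$ is maximized at $b_1=\|\mathbf{h}\|^2,\,b_2=\cdots=b_n=0$, with value $d_1\|\mathbf{h}\|^2$, and minimized at $b_n=\|\mathbf{h}\|^2,\,b_1=\cdots=b_{n-1}=0$, with value $d_n\|\mathbf{h}\|^2$. To translate these optimal weightings back to a statement about $\mathbf{U}$, I would invoke the equality case of Cauchy--Schwarz: $|\mathbf{u}_1^H\mathbf{h}|^2=\|\mathbf{h}\|^2$ together with $\|\mathbf{u}_1\|=1$ forces $\mathbf{u}_1=e^{j\theta}\mathbf{h}/\|\mathbf{h}\|$, while $b_i=0$ for $i\geq 2$ forces $\mathbf{u}_2,\ldots,\mathbf{u}_n$ to lie in $\mathbf{h}^\perp$; together they form an orthonormal basis of that complementary subspace. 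Choosing the phase trivially, one representative maximizer is the stated $\mathbf{U}=[\mathbf{h}/\|\mathbf{h}\|,\mathbf{h}^\perp/\|\mathbf{h}\|]$, and the minimizer follows by the same reasoning with the roles of the first and last columns exchanged.

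I do not anticipate a real obstacle: the entire argument is a one-step Rayleigh--Ritz-flavored observation. The only point worth flagging is that the Cauchy--Schwarz upper bound $b_i\leq\|\mathbf{h}\|^2$ need not be imposed separately because it is already implied by nonnegativity and the sum constraint. One should also note that the extremizing $\mathbf{U}$ is unique only up to a unit-modulus phase on the column aligned with $\mathbf{h}$ and the unitary freedom within $\mathbf{h}^\perp$; this non-uniqueness is immaterial for the scalar quadratic form, so a single canonical representative suffices for the statement.
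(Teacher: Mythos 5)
Your proposal is correct and follows essentially the same route as the paper: the substitution $b_i=|\mathbf{u}_i^H\mathbf{h}|^2$ is exactly the paper's $|\tilde h_i|^2$ with $\tilde{\mathbf{h}}=\mathbf{U}^H\mathbf{h}$, and both arguments reduce the problem to optimizing the linear form $\sum_i d_i b_i$ on the simplex $\sum_i b_i=\|\mathbf{h}\|^2$. You simply spell out two steps the paper leaves implicit (the vertex argument for the linear program and the Cauchy--Schwarz equality case recovering $\mathbf{U}$), which is a welcome but not substantively different elaboration.
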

\begin{proof}
We can rewrite the maximization problem in the statement of the lemma as
\begin{align}\label{EQ_Lemma1_proof}
\max\,\sum_{i=1}^{n} \,d_i|\tilde{h}_i|^2,\,s.t.\,\sum_{i=1}^{n}\,|\tilde{h}_i|^2=||\mathbf{h}||^2,
\end{align}
where $\tilde{\mathbf{h}}=\mathbf{U}^H\mathbf{h}$, $\tilde{h}_i$ is the $i$th entry of $\tilde{\mathbf{h}}$.
Then it can be easily seen that $|\tilde{h}_1|=||\mathbf{h}||$ with $|\tilde{h}_2|=|\tilde{h}_3|=\cdots=|\tilde{h}_n|=0$
can optimize \eqref{EQ_Lemma1_proof}. Therefore, it is clear that $\mathbf{U}=[\mathbf{h}/||\mathbf{h}||,\,\mathbf{h}^{\perp}/||\mathbf{h}||]$.
The minimization part can be proved similarly.
\end{proof}

Now, we identify the eigenvectors of the optimal $\mathbf{S_u^*}$
and $\mathbf{S_v^*}$ through the following lemma.

\begin{lemma}\label{Th_U_eq_V}
The optimal covariance matrices of the signal and AN
$\mathbf{S_u^*}$ and $\mathbf{S_v^*}$ for
\eqref{EQ_general_rate_AN} have the same eigenvectors as
$\mathbf{[h/||h||,\,h^{\perp}/||h||]}$.
\end{lemma}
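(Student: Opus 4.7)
The plan is to use the isotropy of the eavesdropper's channel to collapse the eigenvector dependence of the objective into a single scalar, and then apply Lemma~\ref{lemma_optimal_direction} to pin down the eigenvectors.

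The first step is a decomposition. Since $\mathbf{g}\sim CN(0,\mathbf{I}_{n_T})$ is isotropic, $\E[\log(1+\mathbf{g}^H\mathbf{S}\mathbf{g})]$ depends on any PSD matrix $\mathbf{S}$ only through its eigenvalues. Combined with $\E[\log(A/B)]=\E[\log A]-\E[\log B]$ applied to the two log-ratios in \eqref{EQ_general_rate_AN}, the objective rewrites (dropping the outer $(\cdot)^+$) as $F(\mathbf{S_u}+\mathbf{S_v})-F(\mathbf{S_v})$, where $F(\mathbf{S}):=\log(1+\mathbf{h}^H\mathbf{S}\mathbf{h})-\E[\log(1+\mathbf{g}^H\mathbf{S}\mathbf{g})]$. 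For a PSD $\mathbf{S}$ with eigenvalues held fixed, $F(\mathbf{S})$ is therefore a monotone increasing function of the single scalar $\mathbf{h}^H\mathbf{S}\mathbf{h}$.

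Lemma~\ref{lemma_optimal_direction}, applied with $\mathbf{D}$ taken as the ordered diagonal of eigenvalues of $\mathbf{S}$, then identifies the extremal eigenvector orientations: $F(\mathbf{S})$ is maximized when $\mathbf{h}/||\mathbf{h}||$ is the eigenvector of the largest eigenvalue of $\mathbf{S}$, and minimized when it is the eigenvector of the smallest. Because the objective rewards a large $F(\mathbf{S_u^*}+\mathbf{S_v^*})$ and a small $F(\mathbf{S_v^*})$, both pieces push $\mathbf{h}/||\mathbf{h}||$ to be a common eigenvector of $\mathbf{S_u^*}+\mathbf{S_v^*}$ and $\mathbf{S_v^*}$, and hence of $\mathbf{S_u^*}$. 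To convert this heuristic into a proof that respects the coupling constraint $\mathbf{S_u}\succeq 0$, I argue by simultaneous unitary rotation: for any unitary $\mathbf{V}$ the pair $(\mathbf{V}\mathbf{S_u^*}\mathbf{V}^H,\mathbf{V}\mathbf{S_v^*}\mathbf{V}^H)$ is feasible, has the same eavesdropper term by isotropy of $\mathbf{g}$, and its main channel term is evaluated at $\tilde{\mathbf{h}}=\mathbf{V}^H\mathbf{h}$. Global optimality at $\mathbf{V}=\mathbf{I}$, combined with the eigenvalue-level characterization above, then pins $\mathbf{h}/||\mathbf{h}||$ as a common eigenvector of $\mathbf{S_u^*}$ and $\mathbf{S_v^*}$.

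The main obstacle I anticipate is establishing the common eigenbasis on the $(n_T-1)$-dimensional subspace $\mathbf{h}^\perp$: for $[\mathbf{h}/||\mathbf{h}||,\mathbf{h}^\perp/||\mathbf{h}||]$ to simultaneously diagonalize both matrices, the restrictions of $\mathbf{S_u^*}$ and $\mathbf{S_v^*}$ to $\mathbf{h}^\perp$ must commute. To handle this I plan to exploit the residual symmetry of the problem under unitary rotations stabilizing $\mathbf{h}$: any such rotation preserves the main channel exactly (since $\mathbf{U}\mathbf{h}=\mathbf{h}$) and, by isotropy of $\mathbf{g}$, preserves the eavesdropper term under simultaneous conjugation of $\mathbf{S_u}$ and $\mathbf{S_v}$. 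This freedom lets me choose a rotation that diagonalizes, say, the $\mathbf{h}^\perp$-block of $\mathbf{S_v^*}$; the technically delicate point is to argue that optimality forces the corresponding block of $\mathbf{S_u^*}$ to be diagonal in the very same basis, which I expect to be the heart of the argument.
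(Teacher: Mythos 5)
Your core argument is the same as the paper's: the isotropy of $\mathbf{g}$ makes the eavesdropper's term depend on $\mathbf{S_u}+\mathbf{S_v}$ and $\mathbf{S_v}$ only through their eigenvalues, after which Lemma~\ref{lemma_optimal_direction} is invoked to align the largest eigenvalue of $\mathbf{S_u}+\mathbf{S_v}$ and the smallest eigenvalue of $\mathbf{S_v}$ with $\mathbf{h}/||\mathbf{h}||$; the paper does exactly this by writing the optimization in the nested form $\max_{\mathbf{D}_1,\mathbf{D}_2}\max_{\mathbf{U},\mathbf{V}}$ and observing the inner maximization touches only the main-channel term.

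The ``delicate point'' you anticipate at the end --- forcing the $\mathbf{h}^{\perp}$-blocks of $\mathbf{S_u^*}$ and $\mathbf{S_v^*}$ to commute --- is not something the paper's argument needs to confront, and you do not need it either: the lemma is used only as a without-loss-of-optimality statement (there \emph{exists} an optimizer with the stated eigenvectors), not a characterization of \emph{every} optimizer. Once the eigenvalues are held fixed, the eavesdropper term is invariant under any reorientation and the main-channel term $\mathbf{h}^H\mathbf{S}\mathbf{h}$ is insensitive to how the eigenvectors are arranged within $\mathbf{h}^{\perp}$, so any common orthonormal basis of $\mathbf{h}^{\perp}$ serves, and the residual-symmetry machinery you sketch is unnecessary. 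The one concern you raise that is genuinely substantive is feasibility of the reoriented pair, i.e., whether $\mathbf{S_u}=\mathbf{U}\mathbf{D}_1\mathbf{U}^H-\mathbf{V}\mathbf{D}_2\mathbf{V}^H$ remains positive semi-definite after the two eigenbases are permuted independently (pairing the largest eigenvalue of $\mathbf{D}_1$ with the smallest of $\mathbf{D}_2$ along $\mathbf{h}$ can make some eigenvalue of the difference negative); the paper silently skips this point, so if you want a proof that is tighter than the paper's, that --- rather than the commutation issue --- is where the effort should go.
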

\begin{proof}
Assume $\mathbf{S_u+S_v}$ and $\mathbf{S_v}$ are eigen-decomposed as $\mathbf{U}\mathbf{D}_1\mathbf{U}^H$ and $\mathbf{V}\mathbf{D}_2\mathbf{V}^H$, respectively. First, we can reform \eqref{EQ_general_rate_AN2} as
\begin{align}
\underset{\mathbf{S_u},\,\mathbf{S_v}}\max\,R=\underset{\mathbf{D}_1,\,\mathbf{D}_2}\max\underset{\mathbf{U},\,\mathbf{V}}\max\,R
=\underset{\mathbf{D}_1,\,\mathbf{D}_2}\max\underset{\mathbf{U},\,\mathbf{V}}\max\,\left(\log
\left(\frac{1+\mathbf{h}^H\mathbf{U}\mathbf{D}_1\mathbf{U}^H\mathbf{h}}{1+\mathbf{h}^H\mathbf{V}\mathbf{D}_2\mathbf{V}^H\mathbf{h}}\right)
-\E\left[\log
\left(\frac{1+\mathbf{g}^H\mathbf{U}\mathbf{D}_1\mathbf{U}^H\mathbf{g}}{1+\mathbf{g}^H\mathbf{V}\mathbf{D}_2\mathbf{V}^H\mathbf{g}}\right)\right]\right)^+.\label{EQ_origin_Th2_problem}
\end{align}
Since $\mathbf{g}$ is isotropically distributed,
\[\E\left[\log
\left(\frac{1+\mathbf{g}^H\mathbf{U}\mathbf{D}_1\mathbf{U}^H\mathbf{g}}{1+\mathbf{g}^H\mathbf{V}\mathbf{D}_2\mathbf{V}^H\mathbf{g}}\right)\right]=\E\left[\log
\left(\frac{1+\mathbf{g}^H\mathbf{D}_1\mathbf{g}}{1+\mathbf{g}^H\mathbf{D}_2\mathbf{g}}\right)\right],
\]
which is independent of $\mathbf{U}$ and $\mathbf{V}$. Thus the inner optimization problem on the right hand side (RHS) of (\ref{EQ_origin_Th2_problem}) becomes
\begin{equation}
(\mathbf{U}^*,\,\mathbf{ V}^*)=\arg\underset{\mathbf{U},\,\mathbf{V}}\max\,\log
\left(\frac{1+\mathbf{h}^H\mathbf{U}\mathbf{D}_1\mathbf{U}^H\mathbf{h}}{1+\mathbf{h}^H\mathbf{V}\mathbf{D}_2\mathbf{V}^H\mathbf{h}}\right).
\end{equation}
Then from Lemma \ref{lemma_optimal_direction} we know that $\mathbf{U}=\Pi_{\mathbf{U}}[\mathbf{h}/||\mathbf{h}||,\,\mathbf{h}^{\perp}/||\mathbf{h}||]$
and $\mathbf{V}=\Pi_{\mathbf{V}}[\mathbf{h}/||\mathbf{h}||,\,\mathbf{h}^{\perp}/||\mathbf{h}||]$ can simultaneously maximize
 and minimize the numerator and denominator, respectively, where $\Pi_{\mathbf{U}}$ and $\Pi_{\mathbf{V}}$ are the permutation matrices such that the eigenvector
 $\mathbf{h}/||\mathbf{h}||$ is in the direction of the maximum and minimum entries of $\mathbf{D}_1$ and $\mathbf{D}_2$, respectively.
 Therefore, $R$ is maximized. As a result, $\mathbf{S_u}$ and $\mathbf{S_v}$ have the same eigenvectors.
\end{proof}

We then introduce the interlacing theorem in Lemma \ref{LE_interlacing_theorem} \cite[p.182]{Horn_matrix_analysis} which will be used in proving beamforming is optimal (Theorem \ref{Th_optimal_BF}).
\begin{lemma}[Interlacing theorem] \label{LE_interlacing_theorem}
Let $\mathbf{M}\in \mathds{C}^{n\times n}$ be a Hermitian matrix and let $\mathbf{a}\in \mathds{C}^{n}$ be a given vector. We then have
\begin{align}
&\mbox{(a)}\,\,\,\,\,\,\lambda_{k}(\mathbf{M}\pm\mathbf{a}\mathbf{a}^H) \leq \lambda_{k+1}(\mathbf{M}) \leq \lambda_{k+2}(\mathbf{M}\pm\mathbf{a}\mathbf{a}^H),\,\,\,k=1,2,\ldots,n-2,\label{EQ_interlacing1}\\
&\mbox{(b)}\,\,\,\,\,\,\lambda_{k}(\mathbf{M}) \leq \lambda_{k+1}(\mathbf{M}\pm\mathbf{a}\mathbf{a}^H) \leq \lambda_{k+2}(\mathbf{M}),\,\,\,k=1,2,\ldots,n-2,\label{EQ_interlacing2}
\end{align}
where $\lambda_k(\mathbf{A})$ is the $k$th eigenvalue of $\mathbf{A}$ in ascending order.
\end{lemma}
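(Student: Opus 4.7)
The plan is to derive both (a) and (b) from the Courant--Fischer min-max characterization of the eigenvalues of a Hermitian matrix in ascending order, $\lambda_k(\bM)=\min_{\calS:\dim\calS=k}\max_{\bx\in\calS,\|\bx\|=1}\bx^H\bM\bx$, together with its dual max-min form. The first step I would take is to establish the single-step rank-one interlacing $\lambda_k(\bM)\leq\lambda_k(\bM+\ba\ba^H)\leq\lambda_{k+1}(\bM)$ for $k=1,\ldots,n-1$, together with the mirror statement $\lambda_{k-1}(\bM)\leq\lambda_k(\bM-\ba\ba^H)\leq\lambda_k(\bM)$ obtained by replacing $\bM$ with $\bM-\ba\ba^H$ and relabeling.

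For the lower bound $\lambda_k(\bM)\leq\lambda_k(\bM+\ba\ba^H)$, note that $\ba\ba^H\succeq \b0$, so $\bx^H(\bM+\ba\ba^H)\bx\geq \bx^H\bM\bx$ pointwise, and the inequality passes through the min-max. For the upper bound, I would take an optimal $(k+1)$-dimensional subspace $\calS^*$ attaining $\lambda_{k+1}(\bM)$, and intersect it with the hyperplane $\{\bx:\ba^H\bx=0\}$; the result has dimension at least $k$, and on it $\bx^H(\bM+\ba\ba^H)\bx=\bx^H\bM\bx\leq\lambda_{k+1}(\bM)$, which via the min-max characterization forces $\lambda_k(\bM+\ba\ba^H)\leq\lambda_{k+1}(\bM)$.

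With the single-step interlacing in hand, I would simply chain two successive instances. Writing $\bN\triangleq\bM+\ba\ba^H$, applying the inequality at levels $k$ and $k+1$ gives
\begin{equation*}
\lambda_k(\bN)\leq\lambda_{k+1}(\bM)\leq\lambda_{k+1}(\bN)\leq\lambda_{k+2}(\bM)\leq\lambda_{k+2}(\bN),
\end{equation*}
from which (a) with the $+$ sign is the outer three-term subchain $\lambda_k(\bN)\leq\lambda_{k+1}(\bM)\leq\lambda_{k+2}(\bN)$, and (b) with the $+$ sign is the inner three-term subchain $\lambda_k(\bM)\leq\lambda_{k+1}(\bN)\leq\lambda_{k+2}(\bM)$ (after shifting the first single-step inequality by one). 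The $-\ba\ba^H$ cases follow by substituting $\bM\leftarrow\bM-\ba\ba^H$ throughout and using the mirror single-step bound.

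The main obstacle I expect is the index bookkeeping near the boundaries $k=1$ and $k=n-2$, where the subspaces used in the Courant--Fischer argument can degenerate in dimension and need to be handled case by case; this is a standard annoyance rather than a genuine difficulty, but it is the place where the chaining step is most easily mis-stated.
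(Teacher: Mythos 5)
Your proposal is correct. Note, however, that the paper does not prove this lemma at all: it is quoted verbatim from Horn and Johnson \cite[p.182]{Horn_matrix_analysis}, so there is no in-paper argument to compare against. Your derivation is the standard textbook one: the single-step rank-one interlacing $\lambda_k(\bM)\leq\lambda_k(\bM+\ba\ba^H)\leq\lambda_{k+1}(\bM)$ via Courant--Fischer (monotonicity under the positive semidefinite perturbation for the lower bound, intersection of an optimal $(k+1)$-dimensional subspace with $\{\bx:\ba^H\bx=0\}$ for the upper bound), followed by chaining two consecutive levels to get the two-step statements (a) and (b), which hold for either sign of the perturbation precisely because the chain $\lambda_k(\bN)\leq\lambda_{k+1}(\bM)\leq\lambda_{k+1}(\bN)\leq\lambda_{k+2}(\bM)\leq\lambda_{k+2}(\bN)$ contains both three-term subchains. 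The index bookkeeping you worry about is harmless here since the lemma only asserts the inequalities for $k\leq n-2$, so every eigenvalue index invoked in the chain stays within $\{1,\ldots,n\}$. The one point worth stating explicitly in a written-up version is that the minimizing subspace in Courant--Fischer is attained (e.g., by the span of the eigenvectors corresponding to the $k+1$ smallest eigenvalues), which justifies picking $\calS^*$.
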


First, we identify the rank property of the optimal
$\mathbf{S_u}^*$.

\begin{Theo}\label{Th_optimal_BF}
For the MISOSE fast fading wiretap channel with the perfect
information of the legitimate channel $\mathbf{h}$, and only the
statistics of the eavesdropper channel $\mathbf{g}\sim CN(0,
\mathbf{I}_{n_T})$ known at the transmitter, with the proposed
GAN-BF, the optimal covariance matrix of signal for \eqref{EQ_general_rate_AN} is $\mathbf{S_u^*}=\frac{P_U}{||\mathbf{h}||^2}\mathbf{h}\mathbf{h}^H$.
\end{Theo}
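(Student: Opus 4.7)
The plan is to build on Lemma~\ref{Th_U_eq_V}, which already pins down the common eigenbasis of $\mathbf{S_u^*}$ and $\mathbf{S_v^*}$, and then rule out any nonzero eigenvalue of $\mathbf{S_u^*}$ in a direction orthogonal to $\mathbf{h}$ by a one-line power-saving perturbation. Because the main-channel numerator in \eqref{EQ_general_rate_AN} sees only the component of $\mathbf{S_u^*}$ along $\mathbf{h}$ while the eavesdropper term sees every eigenvalue, aligning all the signal power with $\mathbf{h}$ must be strictly optimal.

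Concretely, I would choose an orthonormal basis $\{\mathbf{e}_1,\mathbf{e}_2,\ldots,\mathbf{e}_{n_T}\}$ with $\mathbf{e}_1=\mathbf{h}/\|\mathbf{h}\|$ and $\{\mathbf{e}_j\}_{j\ge 2}$ spanning $\mathbf{h}^{\perp}$. By Lemma~\ref{Th_U_eq_V} this basis simultaneously diagonalizes the optimizers: $\mathbf{S_u^*}=\sum_{j=1}^{n_T}q_j\mathbf{e}_j\mathbf{e}_j^H$ and $\mathbf{S_v^*}=\sum_{j=1}^{n_T}p_j\mathbf{e}_j\mathbf{e}_j^H$. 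Substituting into \eqref{EQ_general_rate_AN} and using the isotropy of $\mathbf{g}\sim CN(0,\mathbf{I}_{n_T})$ to write $\mathbf{g}=\sum_j g_j\mathbf{e}_j$ with i.i.d.\ $g_j\sim CN(0,1)$, the deterministic main-channel term collapses to $\log\bigl[(1+\|\mathbf{h}\|^2(q_1+p_1))/(1+\|\mathbf{h}\|^2 p_1)\bigr]$, which depends only on $(q_1,p_1)$, while the eavesdropper term becomes $\E\bigl[\log\bigl((1+\sum_j(q_j+p_j)|g_j|^2)/(1+\sum_j p_j|g_j|^2)\bigr)\bigr]$, in which every off-axis signal eigenvalue $q_j$, $j\ge 2$, appears in the numerator only.

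The core step is a contradiction. Suppose $q_{j_0}^*>0$ for some $j_0\ge 2$ and replace $\mathbf{S_u^*}$ by $\widetilde{\mathbf{S}}_u:=\mathbf{S_u^*}-q_{j_0}^*\mathbf{e}_{j_0}\mathbf{e}_{j_0}^H\succeq\mathbf{0}$, keeping $\mathbf{S_v^*}$ unchanged; this pair is strictly inside the power constraint, hence feasible. The main-channel term is untouched because $q_{j_0}$ does not appear in it, while in the expectation the numerator $1+\mathbf{g}^H(\mathbf{S_u^*}+\mathbf{S_v^*})\mathbf{g}$ is replaced by the pointwise-smaller $1+\mathbf{g}^H(\widetilde{\mathbf{S}}_u+\mathbf{S_v^*})\mathbf{g}$ with strict inequality on the full-measure event $\{g_{j_0}\neq 0\}$, and the denominator is unchanged. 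Hence the eavesdropper term strictly decreases and the secrecy rate strictly grows, contradicting the optimality of $\mathbf{S_u^*}$. Therefore $q_j^*=0$ for every $j\ge 2$, and setting $P_U:=q_1^*$ gives $\mathbf{S_u^*}=(P_U/\|\mathbf{h}\|^2)\mathbf{h}\mathbf{h}^H$, as claimed.

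The step I expect to require the most care is formalizing the strict drop of the expected log-ratio, because its numerator and denominator are coupled through the common random vector $\mathbf{g}$. The cleanest route is the pointwise Loewner comparison $\widetilde{\mathbf{S}}_u+\mathbf{S_v^*}\preceq\mathbf{S_u^*}+\mathbf{S_v^*}$, and the interlacing theorem (Lemma~\ref{LE_interlacing_theorem}) can be invoked as a robust eigenvalue-level certificate of the same monotonicity, preventing any knife-edge cancellation when the rank-one component $q_{j_0}^*\mathbf{e}_{j_0}\mathbf{e}_{j_0}^H$ is removed. A minor technicality is the $(\cdot)^+$ clip in \eqref{EQ_general_rate_AN}: if $R_{GAN}=0$ the statement is vacuous, and otherwise the perturbation is performed strictly inside the positive region so the strict improvement is preserved.
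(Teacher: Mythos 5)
Your proof is correct, but it takes a genuinely different route from the paper's. The paper proves rank-one-ness of $\mathbf{S_u^*}$ via the KKT conditions of \eqref{EQ_general_rate_AN2}: it derives the commutation relation $\mathbf{A^*}(\mathbf{S_u^*})^T=\lambda(\mathbf{S_u^*})^T$ from complementary slackness, shows $\lambda>0$ when $R>0$, and then uses the negative definiteness of $\mathbf{M}$ together with the interlacing theorem (Lemma~\ref{LE_interlacing_theorem}) to conclude that $\mathbf{A^*}=\mathbf{a}\mathbf{a}^H+\mathbf{M}$ has at most one positive eigenvalue, forcing all but one eigenvalue of $\mathbf{S_u^*}$ to vanish; Lemma~\ref{Th_U_eq_V} then fixes the surviving eigenvector as $\mathbf{h}/\|\mathbf{h}\|$. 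You instead take the common eigenbasis from Lemma~\ref{Th_U_eq_V} as the starting point and run a direct perturbation/contradiction: deleting an off-axis signal eigenvalue $q_{j_0}\mathbf{e}_{j_0}\mathbf{e}_{j_0}^H$ leaves the deterministic main-channel term untouched (it depends only on $q_1,p_1$), keeps the pair feasible, and strictly lowers the eavesdropper's expected log-ratio since the numerator decreases almost surely while the denominator is fixed. This is more elementary and in some ways more robust --- it avoids the KKT machinery on a non-convex problem and the appeal to \cite[Lemma 8]{Pulu_ergodic} for $\lambda>0$ --- at the cost of leaning entirely on Lemma~\ref{Th_U_eq_V} for the eigenbasis (which the paper's Theorem~\ref{Th_optimal_BF} also does, so there is no circularity). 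Two small remarks: your invocation of the interlacing theorem as a ``certificate'' is superfluous --- the pointwise inequality $\mathbf{g}^H(\widetilde{\mathbf{S}}_u+\mathbf{S_v^*})\mathbf{g}\leq\mathbf{g}^H(\mathbf{S_u^*}+\mathbf{S_v^*})\mathbf{g}$ is immediate from subtracting a positive semi-definite rank-one term, and strictness on $\{g_{j_0}\neq 0\}$ plus monotonicity of $\log$ already gives the strict drop in expectation; and your handling of the $(\cdot)^+$ clip matches the paper's own caveat that the rank-one conclusion is asserted only when $R>0$.
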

\begin{proof} Since the secrecy rate
optimization problem \eqref{EQ_general_rate_AN} is non-convex, we
can use the Karush-Kuhn-Tucker (KKT) conditions to find the
necessary conditions for the optimal solutions.
 We first transform
\eqref{EQ_general_rate_AN} into the following form to simplify the
KKT conditions
\begin{align}\label{EQ_general_rate_AN2}
R_{GAN}=\left(\max_{\mathbf{S_u},\,\mathbf{S_v}\in \textsl{S}}\log
\left(\frac{1+\mathbf{h}^H\left(\mathbf{S_u}+\mathbf{S_v}\right)\mathbf{h}}{1+\mathbf{h}^H\mathbf{S_v}\mathbf{h}}\right)
-\E\left[\log
\left(\frac{1+\mathbf{g}^H\left(\mathbf{S_u}+\mathbf{S_v}\right)\mathbf{g}}{1+\mathbf{g}^H\mathbf{S_v}\mathbf{g}}\right)\right]\right)^+.
\end{align}
Compared with \eqref{EQ_general_rate_AN}, in
\eqref{EQ_general_rate_AN2}, we place the maximum inside the
operation $(.)^+$. The equivalence of \eqref{EQ_general_rate_AN}
and \eqref{EQ_general_rate_AN2} comes from the fact that we can
represent $R_{GAN}$ by range of the objective inside $()^+$ in
\eqref{EQ_general_rate_AN} as the union of the sets of positive
and negative rates $R^+$ and $R^-$, respectively, as
$R_{GAN}=\max(R^+\bigcup R^-)^+=\max(R^+,\,R^-)^+$, which is
$\max(R^+)$ when $R^+$ is a nonempty set and zero, otherwise. On
the other hand, $(\max(R^+\bigcup R^-))^+$ is also $\max(R^+)$
when
 $R^+$ is a nonempty set and zero, otherwise. Thus we know \eqref{EQ_general_rate_AN} and
 \eqref{EQ_general_rate_AN2} are equivalent.
Let $\lambda\geq 0$, $\mathbf{\psi_u}\succeq 0$, and
$\mathbf{\psi_v}\succeq 0$ be the Lagrange multipliers of the three
constraints in \eqref{EQ_power_constraint}, respectively, the KKT
conditions of \eqref{EQ_general_rate_AN} is
\begin{align}
\bm\Theta_1=&\mathbf{S_u^*}=\mathbf{A}(\mathbf{S_u^*},\mathbf{S_v^*})-\lambda
\mathbf{I}_{n_T}+\mathbf{\psi_u}^T=\mathbf{0}, \label{EQ_KKT1}\\
\bm\Theta_2=&\mathbf{S_v^*}=\mathbf{A}(\mathbf{S_u^*},\mathbf{S_v^*})-\frac{\mathbf{h}\mathbf{h}^H}{1+\mathbf{h}^H\mathbf{S_v^*}\mathbf{h}}+\E\left[\frac{\mathbf{g}\mathbf{g}^H}{1+\mathbf{g}^H\mathbf{S_v^*}\mathbf{g}}\right]
-\lambda\mathbf{I}_{n_T}+\mathbf{\psi_v}^T=\mathbf{0}, \label{EQ_KKT2}\\
\mathbf{\psi_u}\mathbf{S_u^*}=&\mathbf{S_u^*}\mathbf{\psi_u}=\mathbf{0}, \label{EQ_KKT3}\\
\mathbf{\psi_v}\mathbf{S_v^*}=&\mathbf{S_v^*}\mathbf{\psi_v}=\mathbf{0}, \\
\tr{\left(\mathbf{S_u^*}+\mathbf{S_v^*}\right)}\leq&P_T, \,\,\mathbf{S_u^*}\succeq\mathbf{0}, \,\,\mathbf{S_v^*}\succeq\mathbf{0},
\end{align}
where
\begin{align}\label{EQ_A}
\mathbf{A}(\mathbf{S_u^*},\mathbf{S_v^*})&\triangleq \mathbf{a}\mathbf{a}^H+\mathbf{M},\\
\mathbf{a}\mathbf{a}^H&\triangleq\frac{\mathbf{h}\mathbf{h}^H}{1+\mathbf{h}^H\left(\mathbf{S_u^*}+\mathbf{S_v^*}\right)\mathbf{h}},\\
\mathbf{M}&\triangleq-\E\left[\frac{\mathbf{g}\mathbf{g}^H}{1+\mathbf{g}^H\left(\mathbf{S_u^*}+\mathbf{S_v^*}\right)\mathbf{g}}\right],
\end{align}
and $\mathbf{S_u^*}$ and $\mathbf{S_v^*}$ are the optimal input
covariance matrices of $\mathbf{u}$ and $\mathbf{v}$, respectively. In the following we denote $\mathbf{A}(\mathbf{S_u^*},\mathbf{S_v})$ by $\mathbf{A}^*$ to simplify the notation. After left and right multiplying \eqref{EQ_KKT1} by $(\mathbf{S_u^*})^T$, with \eqref{EQ_KKT3}, we have the relation
$\mathbf{A^*}(\mathbf{S_u}^*)^T=(\mathbf{S_u}^*)^T\mathbf{A^*}=\lambda(\mathbf{S_u^*})^T$,
where
$\lambda=\frac{\tr{\left(\mathbf{A^*}(\mathbf{S_u}^*)^T\right)}}{\tr{\left((\mathbf{S_u}^*)^T\right)}}$.
Then we can apply \cite[Lemma 8]{Pulu_ergodic} to ensure $\lambda
>0$, if $R>0$. Since $\mathbf{A}^*$ and $(\mathbf{S_u^*})^T$ commute, they have
the same eigenvectors. Therefore, we have
\begin{equation}\label{EQ_KKT_equality}
\bm\Lambda_{\mathbf{A}^*}\bm\Lambda_{\mathbf{S_u^*}}=\bm\Lambda_{\mathbf{S_u^*}}\bm\Lambda_{\mathbf{A^*}}=\lambda\bm\Lambda_{\mathbf{S_u^*}},
\end{equation}
where $\Lambda_{\mathbf{A}^*}$ and $\Lambda_{\mathbf{S_u^*}}$ are the eigenvalue matrices of $\mathbf{A}^*$ and $\mathbf{S_u^*}$, respectively.
Due to $\mathbf{M}$ in \eqref{EQ_A} is a negative-definite matrix \cite[Lemma4]{Pulu_ergodic}, from Lemma \ref{LE_interlacing_theorem}, we know that all eigenvalues of
$\mathbf{A^*}$ are smaller to zero except for the largest one. This can be explained as following. By using Lemma \ref{LE_interlacing_theorem} and letting $k=n_T-2$ in \eqref{EQ_interlacing2}, we have $\lambda_{n_T-1}(\mathbf{A^*}) \leq \lambda_{n_T}(\mathbf{M})$. Note that $\mathbf{M}$ is a negative definite matrix, i.e., $\lambda_{n_T} (\mathbf{M})<0$. So we have $\lambda_{i}(\mathbf{A^*})<0$ for $i=1,2,\ldots,n_T-1$.
Since $\lambda$ is positive, from \eqref{EQ_KKT_equality} we know that it must be the largest eigenvalue of $\mathbf{A^*}$, i.e. $\lambda =\lambda_{n_T}(\mathbf{A^*})$. In order to make the equality
$\bm\Lambda_{\mathbf{A^*}}\bm\Lambda_{\mathbf{S_u^*}}=\lambda\bm\Lambda_{\mathbf{S_u^*}}$ valid,
the eigenvalues of $\mathbf{S_u^*}$ corresponding to non-positive
eigenvalues of $\mathbf{ A^*}$ must be all zeros. Therefore, we obtain that $\mathbf{S_u^*}$ has only one nonzero eigenvalue. So the
covariance matrix of $\mathbf{S_u^*}$ is rank one if $R>0$. Then with Lemma \ref{Th_U_eq_V}, we conclude the proof.
\end{proof}

In the following we prove an important property, that is, using all the power is optimal for the proposed AN scheme.
\begin{lemma}\label{Lemma_equality_power_constraint}
To maximize \eqref{EQ_general_rate_AN}, the sum power constraint in \eqref{EQ_power_constraint} is hold with equality.
\end{lemma}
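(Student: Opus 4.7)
The plan is a direct perturbation argument. Suppose, toward a contradiction, that an optimal pair $(\mathbf{S_u^*},\mathbf{S_v^*})$ leaves slack, $\tr(\mathbf{S_u^*}+\mathbf{S_v^*})=P_T-\Delta$ with $\Delta>0$. I want to push the leftover power $\Delta$ into an AN direction that is invisible to Bob yet effective against Eve, thereby producing a strictly better feasible pair. The natural candidate is the orthogonal complement of $\mathbf{h}$: any positive semidefinite matrix supported in $\mathbf{h}^{\perp}$ acts as zero in Bob's quadratic form $\mathbf{h}^H(\cdot)\mathbf{h}$ but is almost surely nonzero in the isotropic form $\mathbf{g}^H(\cdot)\mathbf{g}$ when $\mathbf{g}\sim CN(\mathbf{0},\mathbf{I}_{n_T})$.

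Concretely, I would take $\mathbf{S_v}'=\mathbf{S_v^*}+\frac{\Delta}{n_T-1}\bigl(\mathbf{I}_{n_T}-\mathbf{h}\mathbf{h}^H/\|\mathbf{h}\|^2\bigr)$, observe that $\mathbf{S_v}'\succeq\mathbf{0}$ and $\tr(\mathbf{S_u^*}+\mathbf{S_v}')=P_T$ so the new pair is feasible with the power constraint tight, and then check each term in \eqref{EQ_general_rate_AN}. Both instances of $\mathbf{h}^H(\cdot)\mathbf{h}$ are unchanged because the projector annihilates $\mathbf{h}$, so Bob's log-ratio is the same. For Eve's term, writing $x\triangleq\mathbf{g}^H\mathbf{S_u^*}\mathbf{g}$ and $y\triangleq\mathbf{g}^H\mathbf{S_v^*}\mathbf{g}$, the integrand has the form $\log(1+x/(1+y))$, which is strictly decreasing in $y$ whenever $x>0$; the perturbation sends $y$ to $y'=y+\frac{\Delta}{n_T-1}\mathbf{g}^H\bigl(\mathbf{I}_{n_T}-\mathbf{h}\mathbf{h}^H/\|\mathbf{h}\|^2\bigr)\mathbf{g}$ and this increment is almost surely strictly positive. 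Hence the expectation strictly decreases whenever $\mathbf{S_u^*}\neq\mathbf{0}$, which yields a strictly higher $R_{GAN}$ and contradicts optimality.

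The only edge case is $\mathbf{S_u^*}=\mathbf{0}$, which forces both terms in \eqref{EQ_general_rate_AN} to vanish and gives $R_{GAN}=0$; here the same construction still produces a feasible pair with $\tr=P_T$ that attains the zero maximum, so the lemma remains valid and one may always restrict attention to the boundary of the power set without loss. The main obstacle I foresee is a clean justification that $\mathbf{g}^H\bigl(\mathbf{I}_{n_T}-\mathbf{h}\mathbf{h}^H/\|\mathbf{h}\|^2\bigr)\mathbf{g}>0$ almost surely, which is where the multiple-antenna assumption $n_T\geq 2$ enters; since $\mathbf{g}$ has an absolutely continuous distribution on $\mathds{C}^{n_T}$, the event $\mathbf{g}\in\mathrm{span}(\mathbf{h})$ has measure zero, so this step should be routine rather than substantive.
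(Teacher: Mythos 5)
Your proof is correct and rests on the same key idea as the paper's: any leftover power can be poured into artificial noise supported on $\mathbf{h}^{\perp}$, which leaves Bob's log-ratio untouched while almost surely increasing the denominator in Eve's term, hence strictly decreasing her mutual information whenever $\mathbf{S_u^*}\neq\mathbf{0}$. The only difference is presentational: you carry out the perturbation directly at the covariance-matrix level (making the argument self-contained and handling the $R_{GAN}=0$ edge case explicitly), whereas the paper states the same monotonicity after first reducing to the scalar power parametrization via its Lemma~\ref{Th_U_eq_V}.
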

\begin{proof}
Similar to Theorem 2, the key observation here is
that with the selection of eigenvectors of signal and AN in Lemma \ref{Th_U_eq_V}, the first term on the RHS of (\ref{EQ_origin_Th2_problem}) is
independent of the power of AN in the null space of the legitimate channel. Thus to find $P_{V_i}$ for $i=2,3,\ldots,n_T$ given $P_U$ and $P_{V_1}$, the
objective function becomes
\begin{align}\label{EQ_min_2nd_term}
\underset{P_{V_2}\cdots P_{V_{n_T}}}{\min}\, \E\left[ \log\left(1+\frac{\tilde{G}_1P_{U}}{1+\tilde{G}_1P_{V_1}+\underset{i=2}{\overset{n_T}{\sum}}\tilde{G}_iP_{V_i}}\right) \right].
\end{align}
From \eqref{EQ_min_2nd_term} it can be easily seen that given
 $P_U$ and $P_{V_1}$, the value of the objective function decreases with increasing $P_T$. Thus we may change the first inequality constraint in
 \eqref{EQ_power_constraint} as an equality one.
\end{proof}

Based on Lemma \ref{Th_U_eq_V} and \ref{Lemma_equality_power_constraint}, we have the following property for AN.

\begin{lemma}\label{Th_AN_null_space_uniform}
For the optimization problem \eqref{EQ_general_rate_AN}, the optimal covariance matrix of AN is
\[
\mathbf{S_v}^*=\frac{1}{n_T-1}\left(\frac{n_TP_{V_1}-P_T+P_U}{||\mathbf{h}||^2}\mathbf{h}\mathbf{h}^H+(P_T-P_U-P_{V_1})\mathbf{I}\right).
\]

\end{lemma}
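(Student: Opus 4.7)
The plan is to leverage the structural results already in hand---Lemma~\ref{Th_U_eq_V}, Theorem~\ref{Th_optimal_BF}, and Lemma~\ref{Lemma_equality_power_constraint}---to reduce the question to a symmetric convex minimization over the null-space eigenvalues of $\mathbf{S_v}$. First, from Lemma~\ref{Th_U_eq_V}, $\mathbf{S_v}^*$ shares the eigenbasis $[\mathbf{h}/||\mathbf{h}||,\,\mathbf{h}^{\perp}/||\mathbf{h}||]$ with $\mathbf{S_u}^*$, so it decomposes as
\[
\mathbf{S_v}^*=P_{V_1}\frac{\mathbf{h}\mathbf{h}^H}{||\mathbf{h}||^2}+\mathbf{Q},
\]
where $\mathbf{Q}\succeq\mathbf{0}$ lives on $\mathbf{h}^{\perp}$. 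Theorem~\ref{Th_optimal_BF} fixes $\tr(\mathbf{S_u}^*)=P_U$, and Lemma~\ref{Lemma_equality_power_constraint} then forces $\tr(\mathbf{Q})=P_T-P_U-P_{V_1}$.

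Next, because $\mathbf{g}\sim CN(\mathbf{0},\mathbf{I}_{n_T})$ is rotationally invariant, expressing $\mathbf{g}$ in the eigenbasis of $\mathbf{S_v}^*$ produces i.i.d.\ coordinates, so if $\lambda_2,\ldots,\lambda_{n_T}$ denote the eigenvalues of $\mathbf{Q}$, the only part of the rate that depends on these eigenvalues is
\[
\E\!\left[\log\!\left(1+\frac{\tilde{G}_1 P_U}{1+\tilde{G}_1 P_{V_1}+\sum_{i=2}^{n_T}\lambda_i\tilde{G}_i}\right)\right],
\]
with $\tilde{G}_i$ i.i.d.\ $EXP(1)$ (the first term of \eqref{EQ_simplified_obj} is independent of $\lambda_2,\ldots,\lambda_{n_T}$ by Theorem~\ref{Th_optimal_BF} and Lemma~\ref{Th_U_eq_V}). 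Maximizing $R_{GAN}$ over $\{\lambda_i\}$ therefore amounts to \emph{minimizing} this expectation subject to $\sum_{i=2}^{n_T}\lambda_i=P_T-P_U-P_{V_1}$ and $\lambda_i\geq 0$.

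The key step is then a symmetric-convexity argument. For every realization of $(\tilde{G}_i)_{i\geq 2}$, the integrand has the form $\log(1+a/(b+\sum_i\lambda_i\tilde{G}_i))$ with $a,b>0$, and an elementary second-derivative calculation shows its Hessian in $(\lambda_2,\ldots,\lambda_{n_T})$ equals a strictly positive scalar times the outer product of $(\tilde{G}_2,\ldots,\tilde{G}_{n_T})^H$ with itself, hence is positive semidefinite. Taking expectations preserves convexity, while the i.i.d.\ property of $(\tilde{G}_i)_{i\geq 2}$ makes the expectation symmetric in $(\lambda_2,\ldots,\lambda_{n_T})$. Since a symmetric convex function is Schur-convex, and the uniform vector is majorized by every feasible point of the simplex $\{\lambda_i\geq 0,\,\sum_i\lambda_i=P_T-P_U-P_{V_1}\}$, the minimum is attained at the uniform allocation $\lambda_i=(P_T-P_U-P_{V_1})/(n_T-1)$ for $i=2,\ldots,n_T$.

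Finally, I would translate back: $\mathbf{Q}=\tfrac{P_T-P_U-P_{V_1}}{n_T-1}\!\left(\mathbf{I}-\mathbf{h}\mathbf{h}^H/||\mathbf{h}||^2\right)$, and combining this with $P_{V_1}\mathbf{h}\mathbf{h}^H/||\mathbf{h}||^2$ and collecting the coefficients of $\mathbf{h}\mathbf{h}^H$ and $\mathbf{I}$ yields the closed form claimed. The main obstacle I expect is a clean verification of joint convexity of the integrand in all null-space eigenvalues and connecting it to Schur-convexity so that the uniform allocation is seen to minimize the expectation; everything else is algebraic bookkeeping.
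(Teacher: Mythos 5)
Your proof is correct, and it takes a genuinely different route from the paper's at the key step. The paper also reduces the problem to choosing the null-space eigenvalues so as to maximize $\E\bigl[\log(a+x)-\log(b+x)\bigr]$ with $x=\sum_{i\ge 2}\tilde G_i P_{V_i}$, but it then goes through stochastic-ordering machinery: it verifies that the derivative of $f(x)=\E[\log(a+x)-\log(b+x)]$ is completely monotone, invokes the Laplace-transform order (so that the comparison of expectations reduces to comparing $\E[e^{-sB_1}]$ and $\E[e^{-sB_2}]$), computes the log-MGF of the weighted sum of exponentials explicitly, and finally applies Schur-concavity of $\sum_k\log(1+2P_{V_k}s)$ together with the fact that the uniform vector is majorized by every feasible vector. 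You bypass all of that: you observe that for each realization the integrand is $\phi\bigl(\sum_i\lambda_i\tilde G_i\bigr)$ with $\phi$ convex, so its Hessian is $\phi''\cdot\mathbf{g}\mathbf{g}^H\succeq 0$; expectation preserves convexity, exchangeability of $(\tilde G_i)_{i\ge 2}$ gives symmetry, symmetric plus convex gives Schur-convex, and the uniform point minimizes a Schur-convex function on the simplex. Both arguments land on the same majorization fact, but yours is more elementary and slightly more general --- it uses only that the $\tilde G_i$, $i\ge 2$, are exchangeable and nonnegative, whereas the paper's explicit MGF step is tied to the exponential distribution. The only things worth writing out carefully in your version are the second-derivative computation $\phi''(s)=\frac{1}{(b+s)^2}-\frac{1}{(a+b+s)^2}>0$ (which needs $a>0$, i.e., $P_U>0$) and the conditioning on $\tilde G_1$ so that $a,b$ are treated as constants; both are routine.
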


\begin{proof}To proceed, we transform \eqref{EQ_min_2nd_term} as
\begin{align}
\underset{P_{V_2}\cdots P_{V_{n_T}}}{\max}\, \E\left[ \log\left(1+\tilde{G}_1P_{V_1}+\underset{i=2}{\overset{n_T}{\sum}}\tilde{G}_iP_{V_i}\right)-\log\left(1+\tilde{G}_1(P_U+P_{V_1})+\underset{i=2}{\overset{n_T}{\sum}}\tilde{G}_iP_{V_i}\right) \right]=\underset{P_{V_2}\cdots P_{V_{n_T}}}{\max}\, \E_{\tilde{G}_1}\left[f(x)\Big|\tilde{G}_1\right],\label{EQ_obj_AN_null}
\end{align}
where the equality comes from the conditional mean, $f(x)\triangleq\E\left[ \log\left(a+x \right)-\log\left(b+x\right) \right]$
and we denote $1+\tilde{G}_1P_{V_1}$, $1+\tilde{G}_1(P_U+P_{V_1})$, and $\underset{i=2}{\overset{n_T}{\sum}}\tilde{G}_iP_{V_i}$
by $a$, $b$, and $x$, respectively. If given $\tilde{G}_1=g_1,\,\forall g_1$, the optimal power allocation of $f(x)$ is $P_{V_2}=P_{V_3}=\cdots =P_{V_{n_T}}$,
then for the problem on the left hand side (LHS) of \eqref{EQ_obj_AN_null},  this power allocation is also optimal. This is due to the fact that $\tilde{G}_i$ is
 unknown at transmitter by whom can not be used to change the power allocation. Therefore, we want to prove that under $\underset{i=2}{\overset{n_T}{\sum}}P_{V_i}=P_T-P_U-P_{V_1}$
\begin{equation}\label{EQ_new_obj_AN_uniform}
f\left(\frac{P_T-P_U-P_{V_1}}{n_T-1}\underset{i=2}{\overset{n_T}{\sum}}\tilde{G}_i\right)\geq f\left(\underset{i=2}{\overset{n_T}{\sum}}\tilde{G}_iP_{V_i}\right),\,\,\forall P_{V_i},\,i=2,\,\cdots,\,n_T.
\end{equation}

Here we introduce some results from the
\textit{stochastic ordering theory} \cite{shaked_stochastic_order}
to prove the desired result.
\begin{definition}\cite[p.234]{shaked_stochastic_order}\label{Def_completely_mono} A function $\psi:[0,\infty)\rightarrow \mathds{R}$ is completely monotone if for all $x>0$ and $n=0,1,2,\cdots,$ its derivative $\psi^{(n)}$ exists and $(-1)^n\psi^{(n)}(x)\geq 0$.
\end{definition}

\begin{definition}\cite[(5.A.1)]{shaked_stochastic_order} \label{Def_LT}
Let $B_1$ and $B_2$ be two nonnegative random variables such that
$\E[e^{-sB_1}]\geq\E[e^{-sB_2}]$, for all $s>0$. Then $B_1$ is
said to be smaller than $B_2$ in the Laplace transform order,
denoted as $B_1\leq_{LT} B_2$.
\end{definition}

\begin{lemma}\cite[Th. 5.A.4]{shaked_stochastic_order} \label{Lemma_LT_eq_MG}
Let $B_1$ and $B_2$ be two nonnegative random variables. If
$B_1\leq_{LT}B_2$ then $\E[f(B_1)]\leq\E[f(B_2)]$, where the first
derivative $\psi$ of a differentiable function $f$ on $[0,\infty)$
is completely monotone, provided that the expectations exist.
\end{lemma}

To prove \eqref{EQ_new_obj_AN_uniform}, we let
$B_1=\underset{i=2}{\overset{n_T}{\sum}}\tilde{G}_iP_{V_i}$,
$B_2=\underset{i=2}{\overset{n_T}{\sum}}\tilde{G}_iP_{V_i}^*$ to invoke Lemma \ref{Lemma_LT_eq_MG}, where $P_{V_i}^*$ denotes the optimal value of $P_{V_i}$.
It can be easily verified that $\psi$(x), the first derivative of
$f(x)$, satisfies Definition \ref{Def_completely_mono}. More
specifically, the $n$th derivative of $\psi$ meets
\begin{equation}
\psi^{(n)}(x)=\left\{\begin{array}{ll}
\frac{n!}{(a+x)^{n+1}}-\frac{n!}{(b+x)^{n+1}}>0, & \mbox{if $n$ is even,}\\
\frac{-n!}{(a+x)^{n+1}}+\frac{n!}{(b+x)^{n+1}}<0, & \mbox{if $n$ is odd,}\\
\end{array}\right.
\end{equation}
when $x>0$, since by definition, $b>a >0$ when $R>0$. Now from Lemma
\ref{Lemma_LT_eq_MG} and Definition \ref{Def_LT}, we know that to
prove \eqref{EQ_new_obj_AN_uniform} is equivalent to proving
$\E[e^{-sB_1}]\geq \E[e^{-sB_2}]$ or $\log
(\E[e^{-sB_1}]/\E[e^{-sB_2}])\geq 0,\,\forall s>0$. From
\cite[p.40]{Mathai}, we know that
\begin{equation}\label{EQ_log_MG}
\log
\left(\frac{\E[e^{-sB_1}]}{\E[e^{-sB_2}]}\right)=\sum_{k=2}^{n_T}\log(1+2P_{V_k}^*s)-\sum_{k=2}^{n_T}\log(1+2P_{V_k}s).
\end{equation}
To show the above is nonnegative, we resort to the majorization
theory \cite{Marshall_Inequalities}. Note that
$\sum_{k=2}^{n_T}\log(1+2\check{P}_{V_k}s)$ is a Schur-concave
function in $(\check{P}_{V_2},\ldots,\check{P}_{V_{n_T}})$, $\forall s>0$,
and by the definition of majorization
\[
(P_{V_2}^*,\,\cdots,P_{V_{n_T}}^*)=\left(\frac{P_T-P_U-P_{V_1}}{n_T-1},\,\frac{P_T-P_U-P_{V_1}}{n_T-1},\,\cdots,\,\frac{P_T-P_U-P_{V_1}}{n_T-1}\right)\prec(P_{V_2},\,\cdots,P_{V_{n_T}}),
\]
we know that the RHS of \eqref{EQ_log_MG} is nonnegative, $\forall
s>0$. Then \eqref{EQ_new_obj_AN_uniform} is valid.
From Lemma \ref{Th_U_eq_V} and \ref{Th_AN_null_space_uniform}, we can conclude that
\begin{align}
\mathbf{S_v^*}&=\left[\mathbf{h/||h||,\,h^{\perp}/||h||}\right]diag\left(P_{V_1},\,\frac{P_T-P_U-P_{V_1}}{n_T-1},\,\cdots,\frac{P_T-P_U-P_{V_1}}{n_T-1}\right)\left[\mathbf{h/||h||,\,h^{\perp}/||h||}\right]^H.
\end{align}
Then with the expansion
\[
\frac{\mathbf{h}\mathbf{h}^H}{||\mathbf{h}||^2}+\frac{\mathbf{h}^{\perp}(\mathbf{h}^{\perp})^H}{||\mathbf{h}||^2}=\mathbf{I},
\]
we conclude the proof.
\end{proof}
After substituting the $\mathbf{S_u^*}$ from Theorem
\ref{Th_optimal_BF} and $\mathbf{S_v^*}$ from Lemma
\ref{Th_AN_null_space_uniform} into \eqref{EQ_general_rate_AN}, we
can get \eqref{EQ_simplified_obj}. Note that when the main channel
is fast faded but perfectly known at transmitter, as
\cite{Li_fading_secrecy_j}, the achievable secrecy rate for this
setting can be easily obtained from results in Theorem
\ref{Th_main_result}.

\section{The iterative algorithm for power allocations between signal and \\ generalized artificial noise}\label{Sec_proposed algorithm}
Although we have simplified the optimization problem in
\eqref{EQ_general_rate_AN} to \eqref{EQ_simplified_obj},  since
\eqref{EQ_simplified_obj} is a non-convex stochastic optimization
problem, it is still difficult to analytically solve the optimal
power allocation $P_U$, $P_{V_1}$, and $P_{V_2}$ in
\eqref{EQ_simplified_obj}. Thus in this section we propose an
iterative power allocation algorithm summarized in Table I, which
can find solutions almost the same as the brute-force
search. However, the complexity of the proposed algorithm
is much lower than the one based on brute-force search. More specifically, the brute force search requires searching on a plane for the three variables $P_U$, $P_{V_1}$, and $P_{V_2}$, simultaneously. However, the proposed algorithm divide the search into two sub-problems which costs much less complexity. Before introducing the iterative algorithm, we first
provide a necessary condition in Theorem
\eqref{Th_necessary_condition} for the optimal covariance matrix
$\bS^*_\bv$ of the GAN to be full rank. This condition will be
useful to test the correctness of power allocation found in
proposed algorithm.

First define \begin{align}
F_k\left(x\right)=\int^{\infty}_0\frac{xe^{-t}}{\left(1+xt\right)^k}dt
=e^{1/x}E_k\left(1/x\right),\notag
\end{align}
where $E_k(x)$ is the En-function \cite{Stegun_handbook}.\\

Then we have the necessary condition in the following.
\begin{Theo}\label{Th_necessary_condition}
The necessary condition for the power allocation
$(P_U,P_{V_1}, P_{V_2})$ to be optimal for
\eqref{EQ_simplified_obj} is
\begin{align}
&\frac{1}{1+||\mathbf{ h}||^2P_{V_1}}-\frac{1+||\mathbf{ h}||^2P_U}{1+||\mathbf{ h}||^2(P_U+P_{V_1})}+\left(1+\frac{P_{V_2}}{P_{V_1}}\right)A_1 F_1(P_{V_1})+A_2 F_2(P_{V_1})\notag\\
&+\left(n_T-1+\frac{P_{V_1}}{P_{V_2}}\right)\underset{k=1}{\overset{n_T}{\sum}}\frac{B_k}{P_{V_2}}F_k(P_{V_2})-\frac{P_{V_1}}{P_{V_2}}B_{n_T}F_{n_T}(P_{V_2})
-(P_{V_1}+(n_T-1)P_{V_2})\frac{A_1^{'}}{P_U+P_{V_1}}F_1(P_U+P_{V_1})\notag\\
&-\frac{P_{V_1}A_2^{'}}{P_U+P_{V_1}}F_2(P_U+P_{V_1})-\left(n_T-1+\frac{P_{V_1}}{P_{V_2}}\right)\underset{k=1}{\overset{n_T}{\sum}}B_k^{'}F_k(P_{V_2})+\frac{P_{V_1}}{P_{V_2}}B_{n_T}^{'}F_{n_T}(P_{V_2})\gtrless 0,
\end{align}
then
\begin{align}
&\left(\frac{A_1}{P_{V_1}}F_1(P_{V_1})+\frac{A_2}{P_{V_1}}F_2(P_{V_1})\right)+\underset{k=1}{\overset{n_T-1}{\sum}}\frac{B_k}{P_{V_2}}F_k(P_{V_2})
-\frac{A_1^{'}}{P_U+P_{V_1}}F_1(P_U+P_{V_1})-\frac{A_2^{'}}{P_U+P_{V_1}}F_2(P_U+P_{V_1})\notag\\
&-\underset{k=1}{\overset{n_T-1}{\sum}}\frac{B_k^{'}}{P_{V_2}}F_k(P_{V_2})
\gtrless \frac{\vectornorm{\mathbf{h}}^4P_U}{\left(1+\vectornorm{\mathbf{h}}^2P_{V_1}\right)\left(1+\vectornorm{\mathbf{h}}^2\left(P_U+P_{V_1}\right)\right)},\label{EQ_Ineq_TH5}
\end{align}
where
\begin{align}
A_1 &= \frac{1-n_T}{\left( 1-\frac{P_{V_2}}{P_{V_1}} \right) ^{n_T}}\frac{P_{V_2}}{P_{V_1}},
\,A_2 = \frac{1}{\left( 1-\frac{P_{V_2}}{P_{V_1}} \right) ^{n_T-1}},\,
B_k = \frac{(n_T-k)\left( -\frac{P_{V_1}}{P_{V_2}}\right) ^{n_T-1-k}}{\left( 1-\frac{P_{V_2}}{P_{V_1}} \right) ^{n_T-k+1}},\notag\\
A_1^{'} &= \frac{1-n_T}{\left( 1-\frac{P_{V_2}}{P_U+P_{V_1}} \right) ^{n_T}}\frac{P_U+P_{V_1}}{P_{V_2}},\,
A_2^{'} = \frac{1}{\left( 1-\frac{P_{V_2}}{P_U+P_{V_1}} \right) ^{n_T-1}},\,
B_k^{'} = \frac{(n_T-k)\left( -\frac{P_U+P_{V_1}}{P_{V_2}}\right) ^{n_T-1-k}}{\left( 1-\frac{P_{V_2}}{P_U+P_{V_1}} \right) ^{n_T-k+1}},
\end{align}
with the requirement $ P_{V_1}>0$.
%Otherwise,
%\begin{align}
%&\left(\frac{A_1}{P_{V_1}}F_1(P_{V_1})+\frac{A_2}{P_{V_1}}F_2(P_{V_1})\right)\mathbf{1}_{P_{V_1}\neq 0}+\underset{k=1}{\overset{n_T-1}{\sum}}\frac{B_k}{P_{V_2}}F_k(P_{V_2})
%-\frac{A_1^{'}}{P_U+P_{V_1}}F_1(P_U+P_{V_1})-\frac{A_2^{'}}{P_U+P_{V_1}}F_2(P_U+P_{V_1})\notag\\
%&-\underset{k=1}{\overset{n_T-1}{\sum}}\frac{B_k^{'}}{P_{V_2}}F_k(P_{V_2})
%<\frac{\vectornorm{\mathbf{h}}^4P_U}{\left(1+\vectornorm{\mathbf{h}}^2P_{V_1}\right)\left(1+\vectornorm{\mathbf{h}}^2\left(P_U+P_{V_1}\right)\right)}.
%\end{align}
\end{Theo}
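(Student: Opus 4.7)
\textbf{Proof Plan for Theorem \ref{Th_necessary_condition}.}

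The plan is to derive the stated necessary condition by taking directional derivatives of the reduced objective in \eqref{EQ_simplified_obj} and applying the KKT conditions at a putative optimum with $P_{V_1}>0$. Because Lemma \ref{Lemma_equality_power_constraint} guarantees that the total-power constraint is tight, I would first eliminate $P_U$ via $P_U=P_T-P_{V_1}-(n_T-1)P_{V_2}$, turning \eqref{EQ_simplified_obj} into an unconstrained maximization in the two variables $(P_{V_1},P_{V_2})$ on the nonnegative orthant. At an interior optimum the partial derivatives in both directions must vanish; at a boundary point ($P_{V_1}=0$ or $P_{V_2}=0$) a one-sided derivative inequality applies. The two ``$\gtrless 0$'' relations in the statement then correspond to the two partial-derivative conditions, which must share a common sign (both $=0$ at an interior optimum, or both having the sign dictated by the active constraint at the boundary).

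The first main step is to express the ergodic term $\E[\log(1+\tilde G_1 P_U/(1+\tilde G_1 P_{V_1}+P_{V_2}\sum_{i=2}^{n_T}\tilde G_i))]$ in closed form. Splitting the log as $\E[\log(1+\tilde G_1(P_U+P_{V_1})+W)]-\E[\log(1+\tilde G_1 P_{V_1}+W)]$, where $W\triangleq P_{V_2}\sum_{i=2}^{n_T}\tilde G_i$ is $\Gamma(n_T-1,P_{V_2})$ distributed and independent of $\tilde G_1$, I would integrate first with respect to $W$ (using its density $w^{n_T-2}e^{-w/P_{V_2}}/(P_{V_2}^{n_T-1}(n_T-2)!)$) and then with respect to $\tilde G_1\sim\mathrm{EXP}(1)$. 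The intermediate rational functions of $\tilde G_1$ and $w$ can be expanded by partial fractions; the coefficients of the resulting $1/(1+\tilde G_1 P_{V_1})^j$ and $1/(P_{V_2}+w)^k$ terms are exactly $A_1,A_2,B_k$ (and their primed counterparts with $P_{V_1}$ replaced by $P_U+P_{V_1}$). Each surviving one-dimensional exponential integral is, by the very definition of $F_k(x)=\int_0^\infty xe^{-t}(1+xt)^{-k}dt$, a combination of $F_k$ evaluated at the relevant argument.

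The second step is to differentiate. Using the integral form of $F_k$ one obtains $F_k'(x)=(F_k(x)-kF_{k+1}(x))/x$ (or equivalent recursions), so the partial derivatives of the ergodic term with respect to $P_{V_1}$ and $P_{V_2}$ stay within the family $\{F_k\}$; the partials of the deterministic first term yield the rational piece $\|\mathbf{h}\|^4 P_U/((1+\|\mathbf{h}\|^2 P_{V_1})(1+\|\mathbf{h}\|^2(P_U+P_{V_1})))$ that appears on the right of \eqref{EQ_Ineq_TH5}. Setting $\partial R/\partial P_{V_2}=0$ (after substituting $\partial P_U/\partial P_{V_2}=-(n_T-1)$) produces the first displayed expression, while $\partial R/\partial P_{V_1}=0$ (with $\partial P_U/\partial P_{V_1}=-1$) produces the second; the sign linking via $\gtrless 0$ reflects the fact that the common Lagrange multiplier from Lemma \ref{Lemma_equality_power_constraint} is sign-definite.

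The principal obstacle will be the bookkeeping of the partial-fraction expansion: keeping track of the coefficients $A_1,A_2,B_k$ across the two arguments $P_{V_1}$ and $P_U+P_{V_1}$, and of the many $F_k$ terms generated when differentiating, is where algebraic slips are most likely. A useful cross-check is to verify the limit $P_{V_2}\to P_{V_1}$, where the rational kernel becomes degenerate and several $A_k,B_k$ blow up while their combinations remain finite; agreement with Goel--Negi's special case ($P_{V_1}=0$) offers a second independent check. Finally, the strict positivity requirement $P_{V_1}>0$ in the statement is used precisely to justify writing both KKT conditions as equalities rather than complementary-slackness inequalities, and to ensure the $F_k(P_{V_1})$ terms are well-defined.
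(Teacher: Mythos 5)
Your plan takes a genuinely different route from the paper's. The paper does not differentiate the reduced scalar objective \eqref{EQ_simplified_obj} at all; it goes back to the matrix KKT condition \eqref{EQ_KKT2} for $\mathbf{S_v^*}$, writes the stationarity matrix as $\mathbf{C}=\mathbf{U}\mathbf{Y}\mathbf{U}^H-\mathbf{c}\mathbf{c}^H$ with $\mathbf{Y}$ diagonal and positive definite (Lemma \ref{LE_Y_positive_definite}) and $\mathbf{c}\parallel\mathbf{h}$, and observes that a \emph{full-rank} $\mathbf{S_v^*}$ forces (via complementary slackness and \eqref{EQ_KKT_equality2}) every eigenvalue of $\mathbf{C}$ to carry the sign of $\tr(\mathbf{C}\mathbf{S_v^*})$. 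The second display is then the secular-equation condition $l(0)=1-\mathbf{c}^H(\mathbf{U}\mathbf{Y}\mathbf{U}^H)^{-1}\mathbf{c}>0$ for the rank-one downdate not to flip an eigenvalue, which collapses to a condition on $[\mathbf{Y}]_{1,1}$ alone; the first display is $\tr(\mathbf{C}\mathbf{S_v^*})\gtrless 0$. Only at the last step does the paper do what you describe: expand the exponential integrals by partial fractions into the $F_k$'s with coefficients $A_i,B_k,A_i',B_k'$. So your identification of the $F_k$/partial-fraction machinery and of the rational term on the right of \eqref{EQ_Ineq_TH5} is on target, and a first-order-condition derivation is in principle viable (indeed the building blocks $\E[\tilde G_1/(1+\cdot)]$ are exactly the partial derivatives of the ergodic terms, and one can check that \eqref{EQ_Ineq_TH5} is equivalent to $\partial R/\partial P_{V_1}\gtrless 0$ with $P_U$ held fixed).

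The gap is that the conditions your plan produces are not the ones stated. After eliminating $P_U=P_T-P_{V_1}-(n_T-1)P_{V_2}$, your two stationarity equations are the combinations $\partial_{P_{V_1}}R-\partial_{P_U}R$ and $\partial_{P_{V_2}}R-(n_T-1)\partial_{P_U}R$ --- these are precisely the paper's $f_1,f_2$ in \eqref{iter_1}--\eqref{iter_2}, which drive the \emph{algorithm}, not Theorem \ref{Th_necessary_condition}. The theorem's first display is instead the weighted sum $P_{V_1}\partial_{P_{V_1}}R+P_{V_2}\partial_{P_{V_2}}R=\tr(\mathbf{C}\mathbf{S_v^*})$, whose sign is that of the power multiplier $\lambda$, and its second display is the single partial $\partial_{P_{V_1}}R$ with $P_U$ fixed; the ``if $\ldots\gtrless 0$ then $\ldots\gtrless\ldots$'' structure is exactly the statement that $\partial_{P_{V_1}}R$ must share the sign of $\lambda$ when $P_{V_1}>0$ (so that $\mu_1=0$). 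Your closing remark that the multiplier is ``sign-definite'' gestures at this, but without the complementary-slackness/eigenvalue bookkeeping you cannot pin down which linear combinations appear or which direction of inequality pairs with which, and you lose the actual content of the theorem, namely that it is a test for full-rankness of the optimal AN covariance rather than a generic interior stationarity condition.
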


Now we present the derivation for the proposed iterative
algorithm. The key idea of the proposed algorithm is as following. To prevent the high complexity of simultaneously solving $P_U$, $P_{V_1}$, and $P_{V_2}$, we try to divide the problem as smaller ones and we can simply use bisection method to solve them. More specifically, we start from the KKT conditions, by eliminating the Lagrange multipliers, we form two equations each has different variables to solve. Then iteratively solve these two equations, we can find the power allocation. With the
Lagrange multipliers $\lambda \geq 0$, $\mu \geq 0$, $\mu_1 \geq
0$, and $\mu_2 \geq 0$, by the KKT conditions of
\eqref{EQ_simplified_obj}, we then have
\begin{align}
g_1\triangleq &\frac{\vectornorm{\mathbf{h}}^2}{1+\vectornorm{\mathbf{h}}^2(P_U^*+P_{V_1}^*)}- \E\left[ \frac{\tilde{G}_1}{1+(P_U^*+P_{V_1}^*)\tilde{G}_1+P_{V_2}\underset{i=2}{\overset{n_T}{\sum}}\tilde{G}_i} \right]-\lambda+\mu=0,\label{par_P_U}\\
g_2\triangleq&\frac{\vectornorm{\mathbf{h}}^2}{1+\vectornorm{\mathbf{h}}^2(P_U^*+P_{V_1}^*)}- \frac{\vectornorm{\mathbf{h}}^2}{1+\vectornorm{\mathbf{h}}^2P_{V_1}^*}\notag\\
&-\E\left[ \frac{\tilde{G}_1}{1+(P_U^*+P_{V_1}^*)\tilde{G}_1+P_{V_2}\underset{i=2}{\overset{n_T}{\sum}}\tilde{G}_i} \right]+\E\left[ \frac{\tilde{G}_1}{1+P_{V_1}^*\tilde{G}_1+P_{V_2}^*\underset{i=2}{\overset{n_T}{\sum}}\tilde{G}_i} \right]-\lambda+\mu_1=0,\label{par_P_V1}\\
g_3\triangleq&-\E\left[ \frac{\underset{i=2}{\overset{n_T}{\sum}}\tilde{G}_i} {1+(P_U^*+P_{V_1}^*)\tilde{G}_1+P_{V_2}^*\underset{i=2}{\overset{n_T}{\sum}}\tilde{G}_i} \right]+\E\left[ \frac{\underset{i=2}{\overset{n_T}{\sum}}\tilde{G}_i} {1+P_{V_1}^*\tilde{G}_1+P_{V_2}^*\underset{i=2}{\overset{n_T}{\sum}}\tilde{G}_i} \right]-(n_T-1)\lambda+\mu_2=0,\label{par_P_V2}\\
\mu P_U^*=&0,\label{EQ_constraint_P_U}\\
\mu_1 P_{V_1}^*=&0,\label{EQ_constraint_P_V1}\\
\mu_2 P_{V_2}^*=&0.\label{EQ_constraint_P_V2}
\end{align}
Assume that $P_U^*$, $P_{V_1}^*$, and $P_{V_2}^*$ are all non-zeros. Combining \eqref{par_P_U}, \eqref{par_P_V1}, \eqref{EQ_constraint_P_U}, and \eqref{EQ_constraint_P_V1} we have
\begin{align} \label{iter_1}
f_1(P_{V_1}^*,P_{V_2}^*)&\triangleq\frac{P_U^*P_{V_1}^*g_1-P_U^*P_{V_1}^*g_2}{P_U^*P_{V_1}^*}=\frac{\vectornorm{\mathbf{h}}^2}{1+\vectornorm{\mathbf{h}}^2P_{V_1}^*}- \E\left[ \frac{\tilde{G}_1}{1+P_{V_1}^*\tilde{G}_1+P_{V_2}^*\underset{i=2}{\overset{n_T}{\sum}}\tilde{G}_i} \right]=0.
\end{align}
Similarly, combining \eqref{par_P_U}, \eqref{par_P_V2}, \eqref{EQ_constraint_P_U}, and \eqref{EQ_constraint_P_V2}, and using the fact that
\begin{align}
\E\left[ \frac{\underset{i=2}{\overset{n_T}{\sum}}\tilde{G}_i} {1+P_{V_1}^*\tilde{G}_1+P_{V_2}^*\underset{i=2}{\overset{n_T}{\sum}}\tilde{G}_i} \right]=(n_T-1)\E\left[\frac{\tilde{G}_2} {1+P_{V_1}^*\tilde{G}_1+P_{V_2}^*\underset{i=2}{\overset{n_T}{\sum}}\tilde{G}_i} \right],
\end{align}
since the channel gain of each antenna is independent and identically distributed (i.i.d.),
we have
\begin{align} \label{iter_2}
f_2(P_U^*,P_{V_1}^*,P_{V_2}^*)\triangleq&\frac{P_U^*P_{V_2}^*g_1-P_U^*P_{V_2}^*\frac{1}{n_T-1}g_3}{P_U^*P_{V_2}^*}\notag\\
=&\frac{\vectornorm{\mathbf{h}}^2}{1+\vectornorm{\mathbf{h}}^2(P_U^*+P_{V_1}^*)}- \E\left[ \frac{\tilde{G}_1}{1+(P_U^*+P_{V_1}^*)\tilde{G}_1+P_{V_2}^*\underset{i=2}{\overset{n_T}{\sum}}\tilde{G}_i} \right]\notag\\
&+\E\left[ \frac{\tilde{G}_2}
{1+(P_U^*+P_{V_1}^*)\tilde{G}_1+P_{V_2}^*\underset{i=2}{\overset{n_T}{\sum}}\tilde{G}_i}
\right]-\E\left[ \frac{\tilde{G}_2}
{1+P_{V_1}^*\tilde{G}_1+P_{V_2}^*\underset{i=2}{\overset{n_T}{\sum}}\tilde{G}_i}
\right]=0.
\end{align}
Now for the $i$th iteration, with a given $P^{(i)}_{V_1}$,
we can find new $(P_U,P_{V_2})$ such that
$f_2(P_U,P^{(i)}_{V_1},P_{V_2})=0$ according to \eqref{iter_2}. We
can set $P_{U}=(P_T-P_{V_2}-P^{(i)}_{V_1})/(n_T-1)$ then $f_2(P_U,P_{V_1},P_{V_2})$ becomes a function with only one variable $P_{V_2}$. We let the resulted $P_{V_2}$ as
$P^{(i+1)}_{V_2}$. Then with a given $P^{(i+1)}_{V_2}$, we can
numerically solve a new $P_{V_1}$ such that
$f_1(P_{V_1},P^{(i+1)}_{V_2})$=0 according to \eqref{iter_1}. We let the resulted $P_{V_1}$ as
$P^{(i+1)}_{V_1}$ and the iterative algorithm follows. The
bisection method can be used to perform the numerical search.

Based on the concept described above, we explain each step
in Table \ref{TA iterative steps} in detail. First, numerically
finding the tuple $(P_{V_1},P_{V_2},P_{U})$ which exactly meet the equality
\eqref{iter_1} (or \eqref{iter_2}) is very hard. Therefore we
relax \eqref{iter_1} and \eqref{iter_2} by inequalities
\begin{equation} \label{eq_iter_ineq}
 |f_1(P_{V_1},P_{V_2})|<\epsilon_1 \;\; \mbox{and} \;\;
|f_2(P_U,P_{V_1},P_{V_2})|<\epsilon_1,
\end{equation}
respectively, where $\epsilon_1$ is a small constant. Once the
values from the bisection search validate the above inequalities,
they are treated as the solutions of these inequalities. Together
with the iteration step described in the end of the previous
paragraph, we obtain Step 2 and 3 in Table \ref{TA iterative
steps}. Second, relaxing equalities \eqref{iter_1} and
\eqref{iter_2} to inequalities \eqref{eq_iter_ineq} make solutions
obtained depend on $\epsilon_1$ and may not satisfy the KKT
conditions. Also the expectations in functions $f_1$ and $f_2$
(\eqref{iter_1} and \eqref{iter_2}) are calculated numerically via
generation of the channel realizations. Thus as in Step 4 of Table
\ref{TA iterative steps}, we use the analytical results in Theorem
\ref{Th_necessary_condition} to verify the correctness of the
solutions. Finally, the initial values for the first iteration in
Step 1 are as follows. Note that two initial values are needed for
specifying the search region of the bisection method. For
initializing Step 2, the two initial values for $P_U$ are $0$ and
$P_T-P_{V_1}^{(i)}$, such that the corresponding values of function
$f_2$ will have opposite signs. And there exists at least one
solution in the interval $[0,P_T-P_{V_1}^{(i)}]$. By the same reason,
for initializing Step 3, the two initial values for $P_{V_1}^{(i)}$ are
$0$ and $P_T-P_{V_2}(n_T-1)$.  In the $i$th iteration, the search regions are $[0,P_T-P_{V_1}^{(i)}]$ and $[0, P_T-(n_T-1)P_{V_2}^{(i)}]$} for $f_2$ and $f_1$, respectively.

However, the bisection method may not always work for searching
solutions for $|f_2|<\epsilon_1$ in Step 2 of Table \ref{TA
iterative steps}. Note that for the initial value
$P_U=P_T-P_{V_1}^{(i)}$, $f_2(P_T-P_{V_1}^{(i)},P_{V_1}^{(i)},0)<0$ given $P_{V_1}^{(i)}$.
On the other hand, given $P_{V_1}^{(i)}$, there exist two cases for
$f_2$ at initial value $P_U=0$: one is that
$f_2(0,P_{V_1}^{(i)},P_{V_2}^{(i)})<0$ as depicted in Figure \ref{Fig_f2_char}
(a), and the other is $f_2(0,P_{V_1}^{(i)},P_{V_2}^{(i)})>0$ as depicted in
Figure \ref{Fig_f2_char} (b). In the later case, the bisection
method works. However, if the former case happens, the function
values have the same sign, and the bisection method does not work.
To solve this problem, we can use the \emph{golden section method}
\cite{Heath_scientific_computing}, which is a technique for
finding the maximum in the interval $[0, P_T-P_{V_1}^{(i)}]$, i.e., to
numerically find $\tilde{P}_U$ first such that given $P_{V_1}^{(i)}$,
$f_2(\tilde{P}_U,P_{V_1}^{(i)},P_{V_2}^{(i)})$ is positive. After that we can
follow the step 2 in Table I to solve $P_U$ in the interval
$[\tilde{P}_U, P_T-P_{V_1}^{(i)}]$. If the maximum of
$f_2(P_U,P_{V_1}^{(i)},P_{V_2}^{(i)})$ in the interval $[0, P_T-P_{V_1}^{(i)}]$ is
still negative, we know that there does not exist any $P_U$ in
this interval such that $f_2(P_U,P_{V_1}^{(i)},P_{V_2}^{(i)})=0$ given
$P_{V_1}^{(i)}$. In this case, we set $P_U=0$ as the solution of
$f_2(P_U,P_{V_1}^{(i)},P_{V_2}^{(i)})=0$ given $P_{V_1}^{(i)}$. From simulation
results, according to the iterative algorithm in Table \ref{TA
iterative steps}, the power $P_U^{(i)}$, $P_{V_1}^{(i)}$, and $P_{V_2}^{(i)}$ will
converge to the optimal solution $P_U^{*}$, $P_{V_1}^{*}$, and
$P_{V_2}^{*}$, respectively, which satisfy the KKT necessary
conditions.

%Note that $P_{V_2}$ is never smaller than $P_{V_2}^{*}$ in the iterative process. This is because from simulation we find that with increasing $P_{V_1}$, the solved $P_{V_2}$ from $f_2(P_U,P_{V_1},P_{V_2})$ decreases. In addition, we increase $P_{V_1}$ from $P_{V_1}^{(0)}=0$, the smallest feasible value of $P_{V_1}$. Then $P_{V_2}$ gradually decreases. Similarly, $P_{V_1}$ is never larger than $P_{V_1}^{*}$ in the iterative process. The reason is also according to the characteristic of $f_1(P_{V_1},P_{V_2})$: if $P_{V_2}\geq P_{V_2}^{*}$, then $P_{V_1}\leq P_{V_1}^{*}$. Once we solve $P_{V_2}^*$ from $f_2$ or $P_{V_1}^*$ from $f_1$, in the next iteration, we will solve $P_{V_1}^*$ from $f_1$ or $P_{V_2}^*$ from $f_2$, respectively. And the iteration stops. Finally, we can obtain the power allocation, $P_U^{*}$, $P_{V_1}^{*}$, and $P_{V_2}^{*}$.

%The pseudo codes of the algorithm is summarized in Table \ref{TA_iterative_algorithm}.
%
%\begin{table} [ht]
%\begin{center}
%\caption{The iterative algorithm}
%\begin{tabular}{l} \label{TA_iterative_algorithm}
%1 & Initialize $P_{V_1}^{(0)}=0$\\
%2 & For $i=1$ to $MAXIT$\\
%3 & \quad Solve $f_2\left(P_U,P_{V_1}^{(i-1)},\left(P_T-P_{V_1}^{(i-1)}-P_U\right)/\left(n_T-1\right)\right)=0$\\
%4 & \quad $P_U^{(i)}=P_U$\\
%4 & \quad $P_{V_2}^{(i)}=P_{V_2}=\left(P_T-P_{V_1}^{(i-1)}-P_U^{(i)}\right)/\left(n_T-1\right)$\\
%6 & \quad Solve $f_1(P_{V_1},P_{V_2}^{(i)})=0$\\
%7 & \quad $P_{V_1}^{(i)}=P_{V_1}$ \\
%8 & End
%\end{tabular}
%\end{center}
%\end{table}

\textit{Remark 1}: Note that in Section \ref{Sec_proposed
algorithm} we assume that $P_U,\,P_{V_1},\mbox{and }\,P_{V_2}$ are
all non-zeros to eliminate the multipliers. For channel conditions
under which low rank AN covariance matrix is optimal, the proposed
algorithm may have $P_{V_1}$ converge to a value approximately
zero. When this value is smaller than a predefined threshold
$\epsilon_2$, we claim that $P_{V_1}=0$ is optimal.

\section{Simulation results}\label{Sec_simulation}
In this section, we illustrate the performance gain of the proposed transmission scheme over
 Goel and Negi's scheme. We use a 2 by 1 by 1 channel as an example. Assume that the noise
  variances of Bob and Eve are normalized to 1. From (\ref{EQ_simplified_obj}) we know that
  the rate $R_{GAN}$ only depends on the norm of the main channel. Therefore, we use $||\mathbf{h}||^2=0.05,\,0.1,\mbox{ and }\,0.2$ to
   indicate different channel conditions in the simulation. For the statistics of the eavesdropper's channel, we set
   $\E[\tilde{{G}}_1]=\E[\tilde{{G}}_2]=1$. In Fig. \ref{Fig_h_005}, \ref{Fig_h_01}, and \ref{Fig_h_02},
    which correspond to $||\mathbf{h}||^2=0.05,\,0.1,\mbox{ and }\,0.2$, respectively, we compare the rates of Goel and Negi's scheme to that of our proposed signaling with the generalized AN. The blue and black curves represent searching the optimal power allocations exhaustively and by the proposed iterative algorithm, respectively. In the iterative algorithm, we set the iteration number $MAXIT$ as 20, $MAXCheck$ as 5, and $\epsilon_1=\epsilon_2=10^{-5}$. From Fig. \ref{Fig_h_005}, \ref{Fig_h_01}, and \ref{Fig_h_02}, we can easily see that the proposed generalized AN scheme indeed provides apparent rate gains over Goel and Negi's scheme in the moderate SNR regions. In addition, we can observe that the rate gains decrease with increasing $||\mathbf{h}||^2$, which is consistent with the results in \cite{Li_fading_secrecy_j}. We can also find that the value of $P_T$ which provides the largest rate gain also decreases with increasing $||\mathbf{h}||^2$. This is because AN in the signal direction provides much more rate gains when Bob's received SNR is relatively small compared to Eve's. Furthermore, the power allocations of the proposed iterative algorithm indeed converges to those by exhaustive search. In and Fig. \ref{Fig_convergence01} we show the convergence rate of the proposed algorithm under $||\mathbf{h}||^2=0.1$ with different $P_T$. It can be found that the proposed algorithm converges fast under different  $P_T$, i.e., it costs at most 7 iterations to the final value, which verifies the complexity of solving the power allocation is much lower than the full search.

As another example, we also illustrate the optimal power allocation among $P_U$, $P_{V_1}$, and $P_{V_2}$ under $||\mathbf{h}||^2=0.05$ in Fig. \ref{Fig_power_allocation_005}. It can be easily seen that as the received SNR increases, the power allocated to $P_{V_1}$ decreases and the rate gain over Goel and Negi's scheme also decreases.

\section{Conclusion}\label{Sec_conclusion}
In this paper we generalized Goel and Negi's artificial noise (AN)
for fast fading secure transmission with full knowledge of the main channel and only the statistics of the
eavesdropper's channel state information at the transmitter. Instead
of transmitting AN in the null space of the legitimate channel, we considered
injecting AN in all directions, including the direction for
conveying the dedicated messages. Our main result provides a highly simplified power allocation problem to describe the ergodic secrecy rate.
To attain it, we proved that for a multiple-input
single-output single-antenna-eavesdropper
 system with the proposed AN injecting scheme, the optimal transmission scheme is a beamformer which is
 aligned to the direction of the legitimate channel. In addition, we provided the necessary condition for the
 optimal covariance matrix of AN to be full rank. After characterizing the optimal eigenvectors of the covariance matrices of
 signal and AN, we also developed an algorithm to efficiently solve the non-convex power allocation problem.
 Through simulations, we verified that the proposed scheme
 outperforms Goel and Negi's AN scheme under certain channel conditions, especially when the legitimate channel is poor.

\section{Appendix}
Before proving Theorem \ref{Th_necessary_condition}, we first introduce the following lemma which will be used.

\begin{lemma}\label{LE_Y_positive_definite}
Given $\mathbf{D}_1 \succ \mathbf{D}_2$,
\begin{align}
\mathbf{Y}\triangleq\E\left[\frac{\mathbf{g}\mathbf{g}^H}{1+\mathbf{g}^H\mathbf{D}_2^H\mathbf{g}}
\right]-\E\left[\frac{\mathbf{g}\mathbf{g}^H}{1+\mathbf{g}^H\mathbf{D}_1^H\mathbf{g}}\right]
\succ 0.\label{EQ_Y}
\end{align}
\end{lemma}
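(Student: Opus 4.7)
The plan is to establish positive definiteness of $\mathbf{Y}$ directly from its defining quadratic form. Since $\mathbf{D}_1$ and $\mathbf{D}_2$ are Hermitian (arising from covariance matrices, so $\mathbf{D}_i^H = \mathbf{D}_i$), for any fixed nonzero vector $\mathbf{x}\in\mathds{C}^{n_T}$ I would compute
\begin{align}
\mathbf{x}^H\mathbf{Y}\mathbf{x} = \E\!\left[|\mathbf{g}^H\mathbf{x}|^2\left(\frac{1}{1+\mathbf{g}^H\mathbf{D}_2\mathbf{g}}-\frac{1}{1+\mathbf{g}^H\mathbf{D}_1\mathbf{g}}\right)\right],
\end{align}
using linearity of expectation and the scalar identity $\mathbf{x}^H\mathbf{g}\mathbf{g}^H\mathbf{x}=|\mathbf{g}^H\mathbf{x}|^2$. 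The goal then reduces to showing this scalar expectation is strictly positive.

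Next, I would combine the two fractions over the common denominator $(1+\mathbf{g}^H\mathbf{D}_1\mathbf{g})(1+\mathbf{g}^H\mathbf{D}_2\mathbf{g})$, which gives
\begin{align}
\mathbf{x}^H\mathbf{Y}\mathbf{x} = \E\!\left[\frac{|\mathbf{g}^H\mathbf{x}|^2\;\mathbf{g}^H(\mathbf{D}_1-\mathbf{D}_2)\mathbf{g}}{(1+\mathbf{g}^H\mathbf{D}_1\mathbf{g})(1+\mathbf{g}^H\mathbf{D}_2\mathbf{g})}\right].
\end{align}
The hypothesis $\mathbf{D}_1\succ\mathbf{D}_2$ is equivalent to $\mathbf{D}_1-\mathbf{D}_2\succ 0$, so $\mathbf{g}^H(\mathbf{D}_1-\mathbf{D}_2)\mathbf{g}\geq 0$ for every realization of $\mathbf{g}$ and is strictly positive whenever $\mathbf{g}\neq\mathbf{0}$. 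Since $\mathbf{D}_1$ and $\mathbf{D}_2$ are positive semidefinite, both factors in the denominator are at least one, so the integrand is always finite and nonnegative.

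Finally, I would argue strict positivity by a measure-theoretic remark: since $\mathbf{g}\sim CN(0,\mathbf{I}_{n_T})$ has a density on $\mathds{C}^{n_T}$, the set $\{\mathbf{g}:\mathbf{g}^H\mathbf{x}=0\}$ (a complex hyperplane) has Lebesgue measure zero, as does $\{\mathbf{g}=\mathbf{0}\}$. Therefore the integrand is strictly positive almost surely for any fixed $\mathbf{x}\neq\mathbf{0}$, making the expectation strictly positive. As this holds for every nonzero $\mathbf{x}$, we conclude $\mathbf{Y}\succ 0$. I do not foresee a real obstacle in this argument; the only subtlety is making the ``almost surely positive'' step rigorous, which is handled by the absolute continuity of the Gaussian law.
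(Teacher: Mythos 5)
Your proof is correct, but it takes a genuinely different route from the paper's. The paper exploits the specific structure of the two matrices in its application, namely $\mathbf{D}_1=\diag(P_U+P_{V_1},P_{V_2},\ldots,P_{V_2})$ and $\mathbf{D}_2=\diag(P_{V_1},P_{V_2},\ldots,P_{V_2})$: it writes each diagonal entry $\mathbf{Y}_{i,i}$ as an explicit one-dimensional integral against $e^{-t}$, observes the integrand is pointwise positive, and invokes an external result (\cite[Lemma 4]{Pulu_ergodic}) to assert that the off-diagonal entries vanish, so $\mathbf{Y}$ is diagonal with positive diagonal. You instead test the quadratic form $\mathbf{x}^H\mathbf{Y}\mathbf{x}$ directly, combine the two fractions into a single integrand proportional to $|\mathbf{g}^H\mathbf{x}|^2\,\mathbf{g}^H(\mathbf{D}_1-\mathbf{D}_2)\mathbf{g}$, and conclude by absolute continuity of the Gaussian law. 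Your argument is more general (it needs neither diagonality of the $\mathbf{D}_i$, nor isotropy of $\mathbf{g}$ beyond having a density, nor the external lemma), while the paper's computation has the side benefit of producing the explicit diagonal entries of $\mathbf{Y}$ as $F_k$-integrals, which are reused in the proof of Theorem \ref{Th_necessary_condition}. One point worth flagging in your favor: in the paper's actual application $\mathbf{D}_1-\mathbf{D}_2=\diag(P_U,0,\ldots,0)$ is only positive \emph{semi}definite, so the literal hypothesis $\mathbf{D}_1\succ\mathbf{D}_2$ fails there; your argument survives this with a one-line modification (for $\mathbf{D}_1-\mathbf{D}_2\succeq 0$ nonzero, $\mathbf{g}^H(\mathbf{D}_1-\mathbf{D}_2)\mathbf{g}>0$ off the null space of $\mathbf{D}_1-\mathbf{D}_2$, which is a proper subspace and hence has measure zero), so you may as well state and prove the lemma under that weaker hypothesis.
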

\begin{proof}
We first write the expectation in \eqref{EQ_Y} in the following integral,
\begin{align}
\mathbf{Y}_{1,1} &=\int^{\infty}_0e^{-t}\frac{1}{\left(1+P_{V_1}t\right)^2}\frac{1}{\left(1+P_{V_2}t\right)^{n_T-1}}dt- \int^{\infty}_0e^{-t}\frac{1}{\left(1+\left(P_U+P_{V_1}\right)t\right)^2}\frac{1}{\left(1+P_{V_2}t\right)^{n_T-1}}dt\notag\\
&=\int^{\infty}_0e^{-t}\left(\frac{1}{\left(1+P_{V_1}t\right)^2}-\frac{1}{\left(1+\left(P_U+P_{V_1}\right)t\right)^2}\right)\frac{1}{\left(1+P_{V_2}t\right)^{n_T-1}}dt>0, \label{EQ_Y_11}
\end{align}
and
\begin{align}
\mathbf{Y}_{i,i} &=\int^{\infty}_0e^{-t}\frac{1}{1+P_{V_1}t}\frac{1}{\left(1+P_{V_2}t\right)^{n_T}}dt-\int^{\infty}_0e^{-t}\frac{1}{1+\left(P_U+P_{V_1}\right)t}\frac{1}{\left(1+P_{V_2}t\right)^{n_T}}dt\notag\\
&=\int^{\infty}_0e^{-t}\left(\frac{1}{1+P_{V_1}t}-\frac{1}{1+\left(P_U+P_{V_1}\right)t}\right)\frac{1}{\left(1+P_{V_2}t\right)^{n_T}}dt>0, \label{EQ_Y_ii}
\end{align}
for $i=2,3,\ldots,n_T$, and from \cite[Lemma 4]{Pulu_ergodic}, we know that the non-diagonal entries of both the first and second terms of $\mathbf{Y}$ in \eqref{LE_Y_positive_definite} are zeros, then $\mathbf{Y}_{i,j}=0$ for $i \neq j$. Therefore, we know that $\mathbf{Y}$ is a diagonal matrix and each diagonal entry from \eqref{EQ_Y_11} and \eqref{EQ_Y_ii} is larger than zero, which completes the proof.
\end{proof}

We now provide the proof of Theorem \ref{Th_necessary_condition}
\begin{proof}
We first rearrange \eqref{EQ_KKT2} as
\begin{align}
\bm\Theta_2 = \mathbf{C}-\lambda\mathbf{I}_{n_T}+\bm{\psi}_\mathbf{v}^T=\mathbf{0}\notag,
\end{align}
where
\begin{align}
\mathbf{C}\triangleq &\mathbf{U}\mathbf{Y}\mathbf{U}^H -\mathbf{c}\mathbf{c}^H, \label{EQ_def_C} \\
 \mathbf{Y} \triangleq & \E\left[\frac{\mathbf{U}^H\mathbf{g}\mathbf{g}^H\mathbf{U}}{1+\mathbf{g}^H\mathbf{U}\mathbf{D}_2\mathbf{U}^H\mathbf{g}} \right]-\E\left[\frac{\mathbf{U}^H\mathbf{g}\mathbf{g}^H\mathbf{U}}{1+\mathbf{g}^H\mathbf{U}\mathbf{D}_1\mathbf{U}^H\mathbf{g}}\right]
=\E\left[\frac{\mathbf{g}\mathbf{g}^H}{1+\mathbf{g}^H\mathbf{D}_2^H\mathbf{g}} \right]-\E\left[\frac{\mathbf{g}\mathbf{g}^H}{1+\mathbf{g}^H\mathbf{D}_1^H\mathbf{g}}\right], \label{EQ_Y_DD}\\
\mathbf{c}\triangleq&\left(\frac{\mathbf{h}^H\mathbf{S_u}\mathbf{h}}{\left(1+\mathbf{h}^H\mathbf{S_v^*}\mathbf{h}\right)\left(1+\mathbf{h}^H\left(\mathbf{S_u}+\mathbf{S_v^*}\right)\mathbf{h}\right)}\right)^{1/2}\mathbf{h}.\label{EQ_def_c}
\end{align}

Similar to \eqref{EQ_KKT_equality}, we have
\begin{equation}\label{EQ_KKT_equality2}
\bm\Lambda_{\mathbf{C}}\bm\Lambda_{\mathbf{S_v^*}}=\bm\Lambda_{\mathbf{S_v^*}}\bm\Lambda_{\mathbf{C}}=\mbox{tr}(\mathbf{C}\mathbf{S_v^*})\bm\Lambda_{\mathbf{S_v^*}}.
\end{equation}
And we know that the necessary condition for the optimal AN to be full rank is that when $\mbox{tr}(\mathbf{C}\mathbf{S_v^*})>0$, $\mathbf{C}$ does not have any negative eigenvalues; or, when $\mbox{tr}(\mathbf{C}\mathbf{S_v^*})<0$, $\mathbf{C}$ does not have any positive eigenvalues. To verify this property, we resort to the fact from \cite[Lemma 5]{Pulu_ergodic} that if all eigenvalues $\lambda$ of $\mathbf{a}\mathbf{a}^H-\mathbf{A}$ are negative, then $l(0)> 0$, where $l(\lambda)$ is defined as,
\begin{align}
l(\lambda)\triangleq1-\mathbf{a}^H\left(\mathbf{A}+\lambda
\mathbf{I}_{n_T}\right)^{-1}\mathbf{a},
\end{align}
and $\mathbf{A}\succ 0$. Note that $l(\lambda)$ is a strictly increasing function when $\lambda>0$. Note also that $\mathbf{C}$ in (\ref{EQ_def_C}) is negated of $\mathbf{a}\mathbf{a}^H-\mathbf{A}$. Thus all eigenvalues of $\mathbf{C}$ are positive implies $l(0)> 0$. Thus by substituting $\mathbf{c}$ and $\mathbf{U}\mathbf{Y}\mathbf{U}^H$ into $\mathbf{a}$ and $\mathbf{A}$, respectively, we have
\begin{align}\label{EQ_l_lambda}
l(\lambda)=1-\mathbf{c}^H\left(\mathbf{U}\mathbf{Y}\mathbf{U}^H+\lambda
\mathbf{I}_{n_T}\right)^{-1}\mathbf{c}.
\end{align}

 By Lemma \ref{LE_Y_positive_definite} we know $\left( \mathbf{U}\mathbf{Y}\mathbf{U}^H \right)^{-1}$ exists. Then we can expand $l(0)> 0$ from \eqref{EQ_l_lambda} as
\begin{align}\label{EQ_c_ineq}
 &1>\mathbf{c}^H\left( \mathbf{U}\mathbf{Y}\mathbf{U}^H \right)^{-1}\mathbf{c}.
 \end{align}
Then after substituting $\mathbf{c}$ from \eqref{EQ_def_c} to \eqref{EQ_c_ineq}, and using Theorem \ref{Th_optimal_BF} and Lemma \ref{Th_U_eq_V}, we have \begin{align}
\left[\mathbf{Y}^{-1}\right]_{1,1} < \frac{\left(1+\vectornorm{\mathbf{h}}^2P_{V_1}\right)\left(1+\vectornorm{\mathbf{h}}^2\left(P_U+P_{V_1}\right)\right)}{\vectornorm{\mathbf{h}}^4P_U}.\notag
\end{align}
From \cite[Lemma 4]{Pulu_ergodic} we know that $\mathbf{Y}$ is diagonal. In addition, with $\mathbf{Y}$ is invertible from the proof of Lemma \ref{LE_Y_positive_definite}, we can further rearrange the above as
\begin{align}
\left[\mathbf{Y}\right]_{1,1} > \frac{\vectornorm{\mathbf{h}}^4P_U}{\left(1+\vectornorm{\mathbf{h}}^2P_{V_1}\right)\left(1+\vectornorm{\mathbf{h}}^2\left(P_U+P_{V_1}\right)\right)}.\notag
\end{align}
Then by the definition of $\mathbf{Y}$ in \eqref{EQ_Y_DD}, and the fractional expansion, we can further express the above as
\begin{align}
&\left(\frac{A_1}{P_{V_1}}F_1(P_{V_1})+\frac{A_2}{P_{V_1}}F_2(P_{V_1})\right)\mathbf{1}_{P_{V_1}\neq 0}+\underset{k=1}{\overset{n_T-1}{\sum}}\frac{B_k}{P_{V_2}}F_k(P_{V_2})
-\frac{A_1^{'}}{P_U+P_{V_1}}F_1(P_U+P_{V_1})-\frac{A_2^{'}}{P_U+P_{V_1}}F_2(P_U+P_{V_1})\notag\\
&-\underset{k=1}{\overset{n_T-1}{\sum}}\frac{B_k^{'}}{P_{V_2}}F_k(P_{V_2})
> \frac{\vectornorm{\mathbf{h}}^4P_U}{\left(1+\vectornorm{\mathbf{h}}^2P_{V_1}\right)\left(1+\vectornorm{\mathbf{h}}^2\left(P_U+P_{V_1}\right)\right)},
\end{align}
where $A_1,\,A_2,\,A_1',\,A_2',\,B_k,\,\mbox{ and }B_k'$ for $k=1,2,\ldots,n_T-1$ are defined in the statement of the theorem.
In addition, $\mbox{tr}(\mathbf{C}\mathbf{S_v^*})>0$ implies
\begin{align}\label{EQ_intermediate_condition}
\frac{1}{1+||\mathbf{ h}||^2P_{V_1}}-\frac{1+||\mathbf{ h}||^2P_{U}}{1+||\mathbf{ h}||^2(P_U+P_{V_1})}+\mathbf{ E}\left[\frac{1+\mathbf{g}^H(\mathbf{ D}_1-\mathbf{ D}_2)\mathbf{ g}}{1+\mathbf{ g}^H\mathbf{ D}_1\mathbf{ g}}\right]-\mathbf{ E}\left[\frac{1}{1+\mathbf{ g}^H\mathbf{ D}_2\mathbf{ g}}\right]>0.
\end{align}
After some arrangement, \eqref{EQ_intermediate_condition} can be further represented by
\begin{align}
&\frac{1}{1+||\mathbf{ h}||^2P_{V_1}}-\frac{1+||\mathbf{ h}||^2P_U}{1+||\mathbf{ h}||^2(P_U+P_{V_1})}+\left(1+\frac{P_{V_2}}{P_{V_1}}\right)A_1 F_1(P_{V_1})+A_2 F_2(P_{V_1})+\left(n_T-1+\frac{P_{V_1}}{P_{V_2}}\right)\underset{k=1}{\overset{n_T}{\sum}}\frac{B_k}{P_{V_2}}F_k(P_{V_2})\notag\\
&-\frac{P_{V_1}}{P_{V_2}}B_{n_T}F_{n_T}(P_{V_2})
-(P_{V_1}+(n_T-1)P_{V_2})\frac{A_1^{'}}{P_U+P_{V_1}}F_1(P_U+P_{V_1})-\frac{P_{V_1}A_2^{'}}{P_U+P_{V_1}}F_2(P_U+P_{V_1})\notag\\
&-\left(n_T-1+\frac{P_{V_1}}{P_{V_2}}\right)\underset{k=1}{\overset{n_T}{\sum}}B_k^{'}F_k(P_{V_2})+\frac{P_{V_1}}{P_{V_2}}B_{n_T}^{'}F_{n_T}(P_{V_2})>0.
\end{align}
\end{proof}
\bibliographystyle{IEEEtran}

\renewcommand{\baselinestretch}{2}
\bibliography{IEEEabrv,SecrecyPs2}

\newpage

\begin{table} [ht]
\begin{center}
\caption{The iterative algorithm for power allocation between
signal and generalized AN}
\begin{tabular}{l l} \label{TA iterative steps}
Step 1 & Set $i=0$, $P_{V_1}^{(0)}=0$, and initialize search region for the bisection method.\\
Step 2 & Given $P_{V_1}^{(i)}$ and the total power constraint
\eqref{EQ_power_constraint}, find $P_{V_2}$ (and thus
$P_U=(P_T-P_{V_2}-P^{(i)}_{V_1})/(n_T-1)$) \\ & such that
$|f_2(P_U,P_{V_1}^{(i)},P_{V_2})|<\epsilon_1$, where $f_2$ is
defined in \eqref{iter_2}. \\ & Set $P_{V_2}^{(i+1)}=P_{V_2}$ \\
Step 3 & Given $P_{V_2}^{(i+1)}$ and the total power constraint
\eqref{EQ_power_constraint}, find $P_{V_1}$ \\ & such that $|
f_1(P_{V_1},P_{V_2}^{(i+1)})|<\epsilon_1$, where $f_1$ is defined
in \eqref{iter_1}\\ & Set $P_{V_1}^{(i+1)}=P_{V_1}$. \\
Step 4 & Let $i=i+1$ and repeat Step 2 to Step 3 until $MAXIT$.\\
Step 5 & Check the whether the final power allocations meet
Theorem \ref{Th_necessary_condition}. \\ & If not, randomly
re-initialize $P_{V_1}^{(0)}$ and run Step 1-4 until $MAXCheck$.
\end{tabular}
\end{center}
\end{table}
\newpage

%\newpage
\begin{figure}[htp]
\centering \epsfig{file=./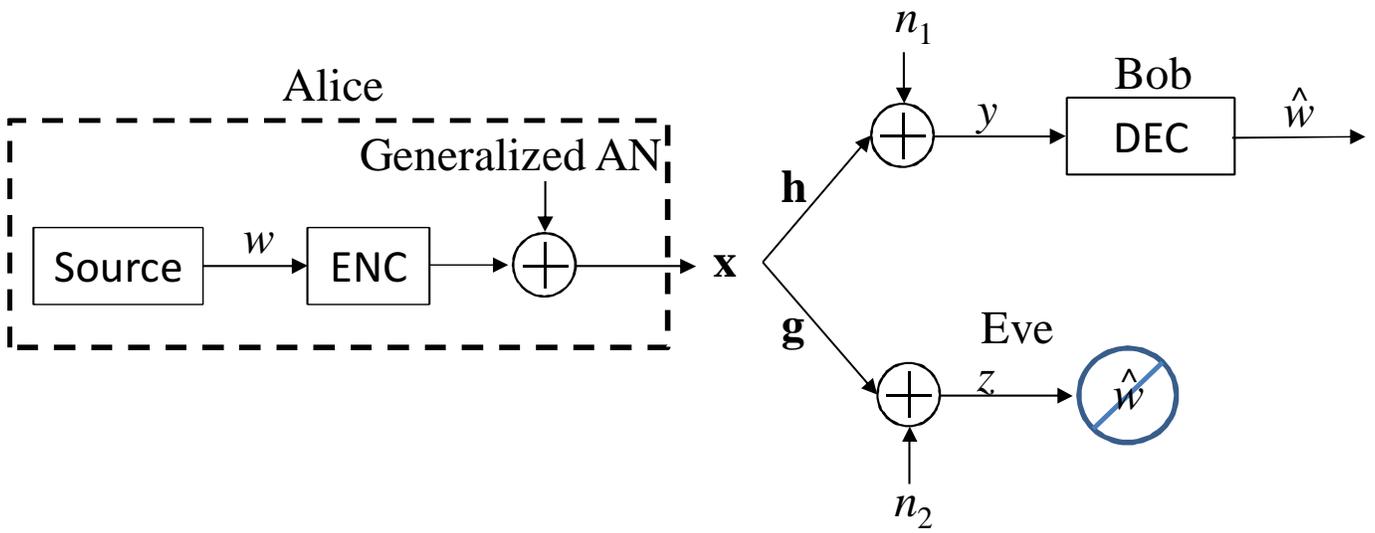 ,width=1\textwidth}
\caption{System model.} \label{Fig_system}
\end{figure}

\begin{figure}\vspace{4cm}
\centering \epsfig{file=./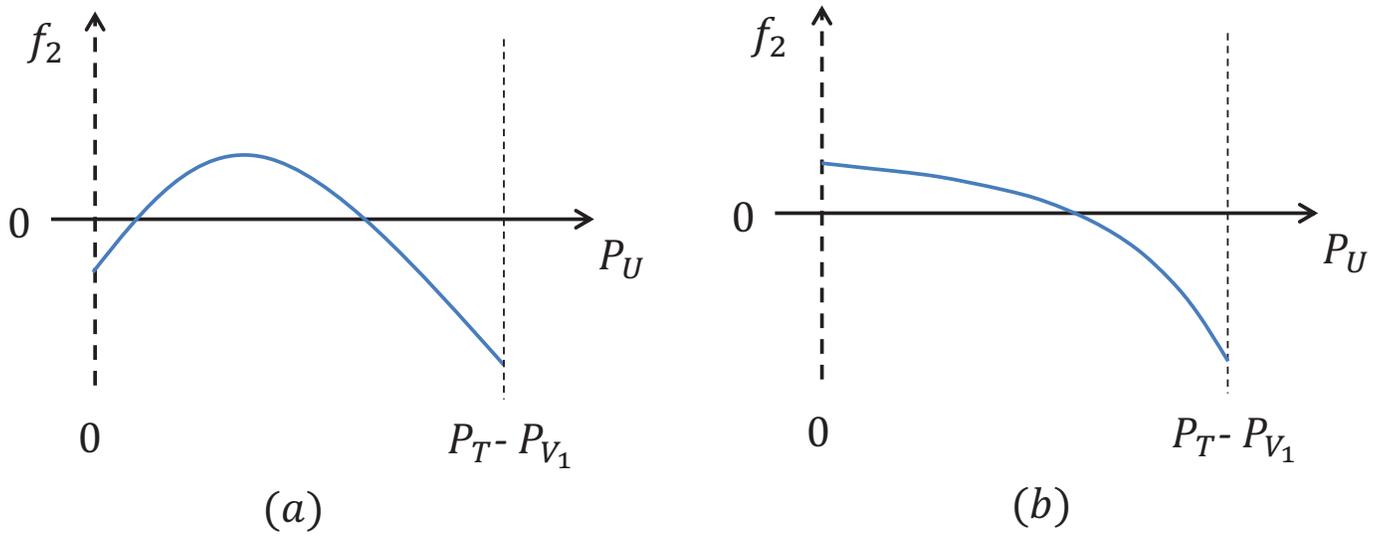 ,width=1\textwidth}
\caption{Characteristic of $f_2(P_U,P_{V_1},P_{V_2})$ given $P_{V_1}$.} \label{Fig_f2_char}
\end{figure}

%\end{figure}

\begin{figure}
\centering \epsfig{file=./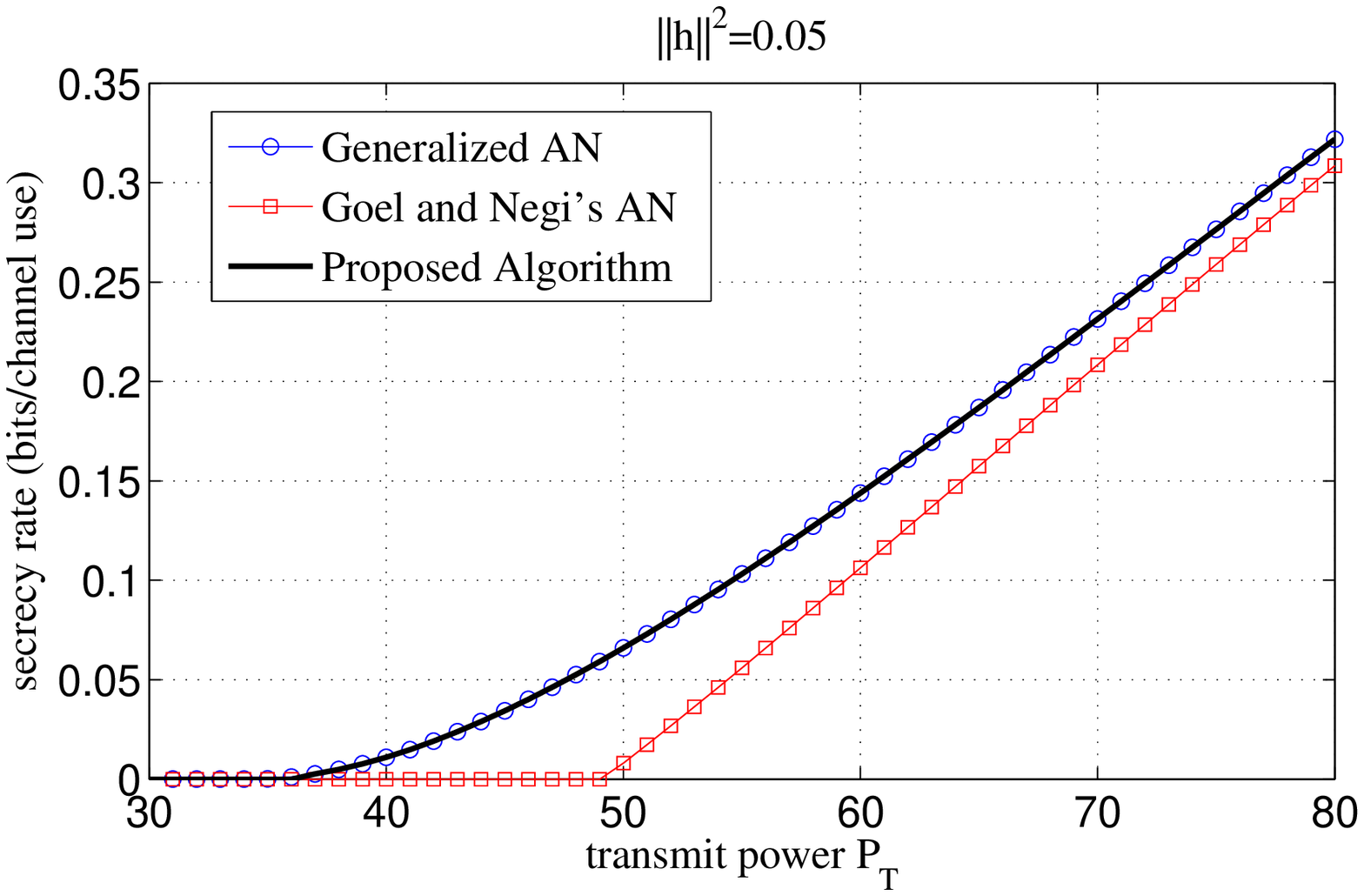 , width=1\textwidth}
2\caption{Secrecy rate versus transmit power under
$||\mathbf{h}||^2=0.05$.} \label{Fig_h_005}
\end{figure}

\begin{figure}
\centering \epsfig{file=./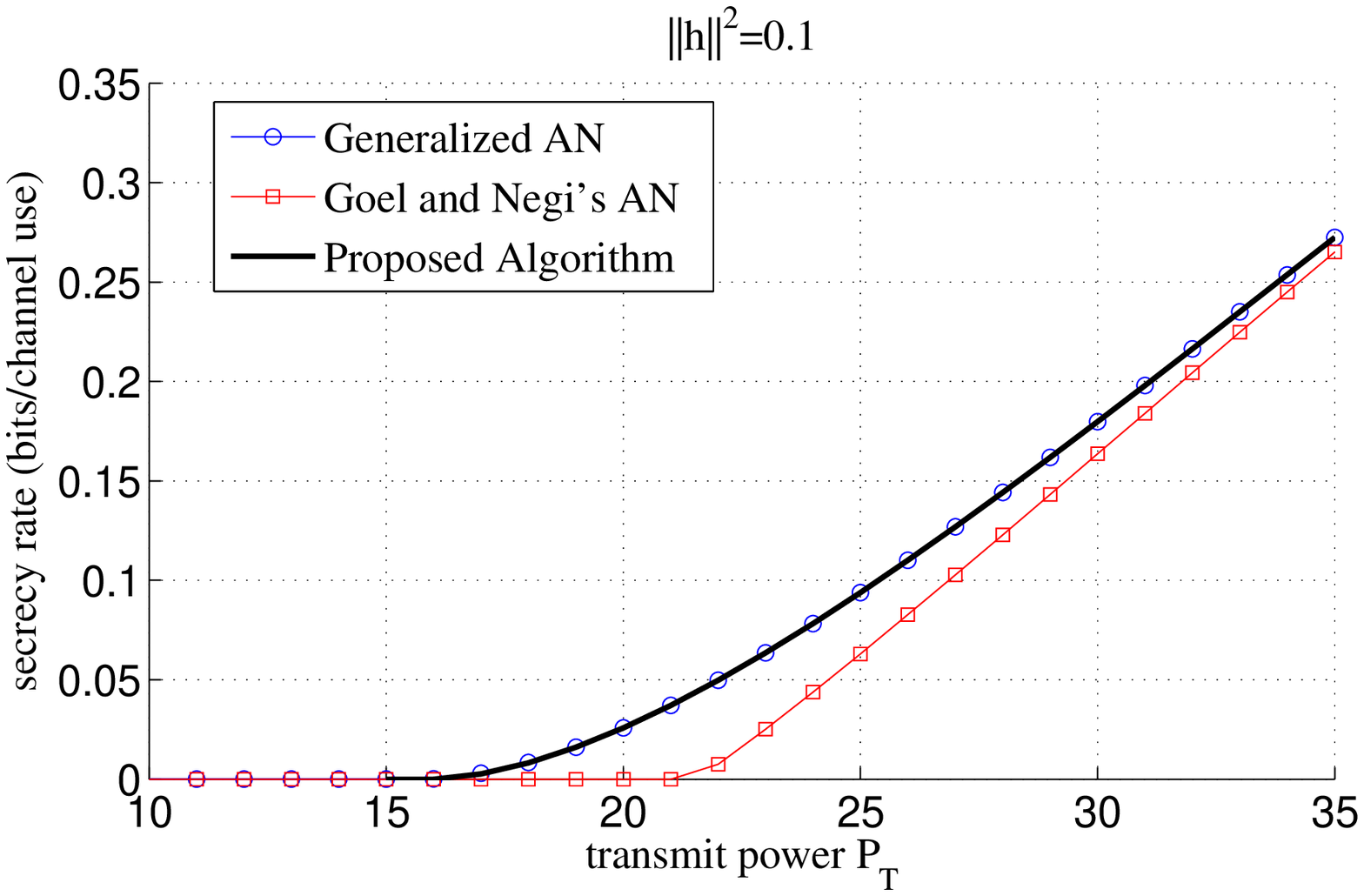 , width=1\textwidth}
\caption{Secrecy rate versus transmit power under
$||\mathbf{h}||^2=0.1$.} \label{Fig_h_01}
\end{figure}

\begin{figure}
\centering \epsfig{file=./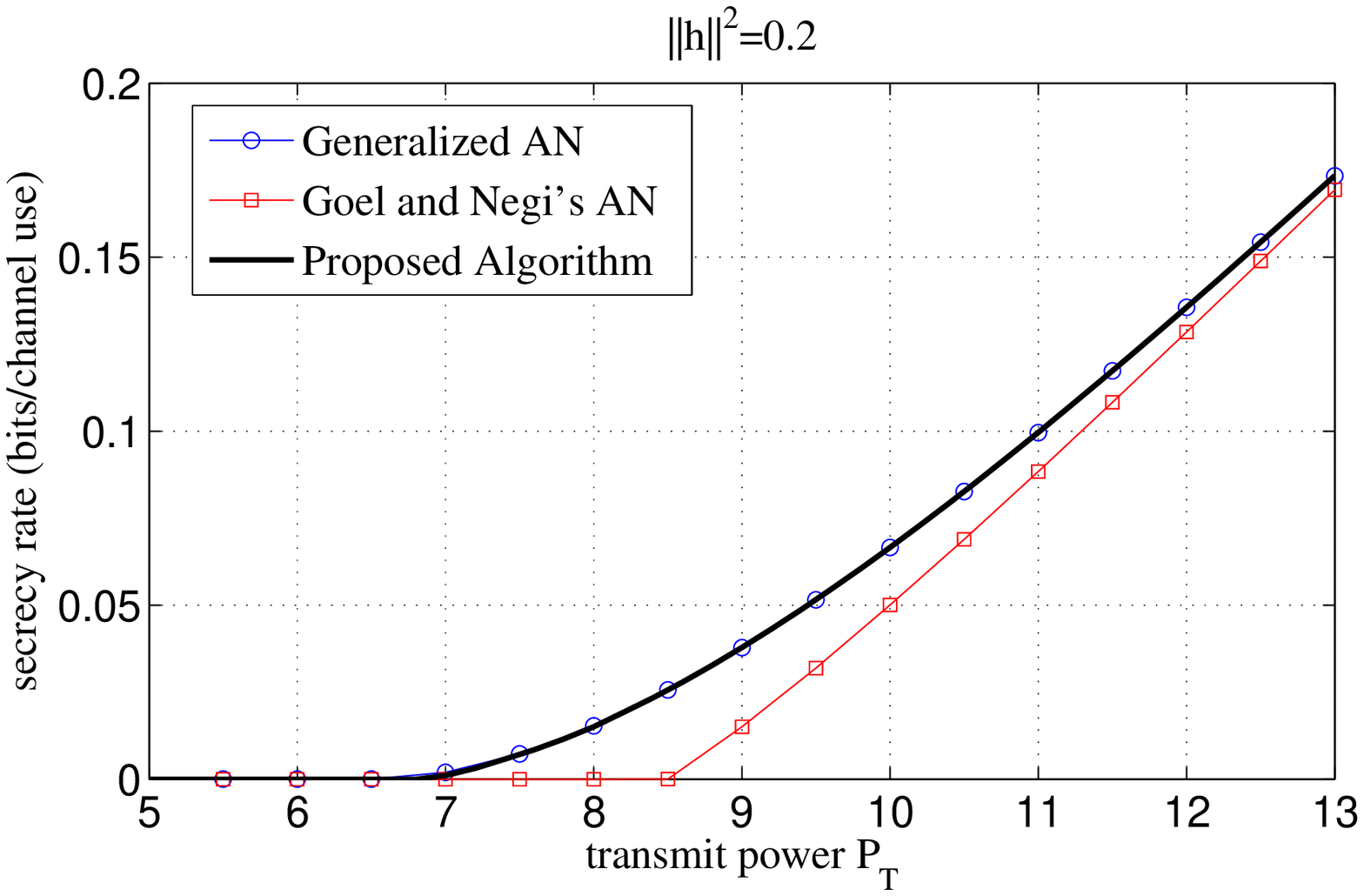 ,width=1\textwidth}
\caption{Secrecy rate versus transmit power under
$||\mathbf{h}||^2=0.2$.} \label{Fig_h_02}
\end{figure}

\begin{figure}
\centering \epsfig{file=./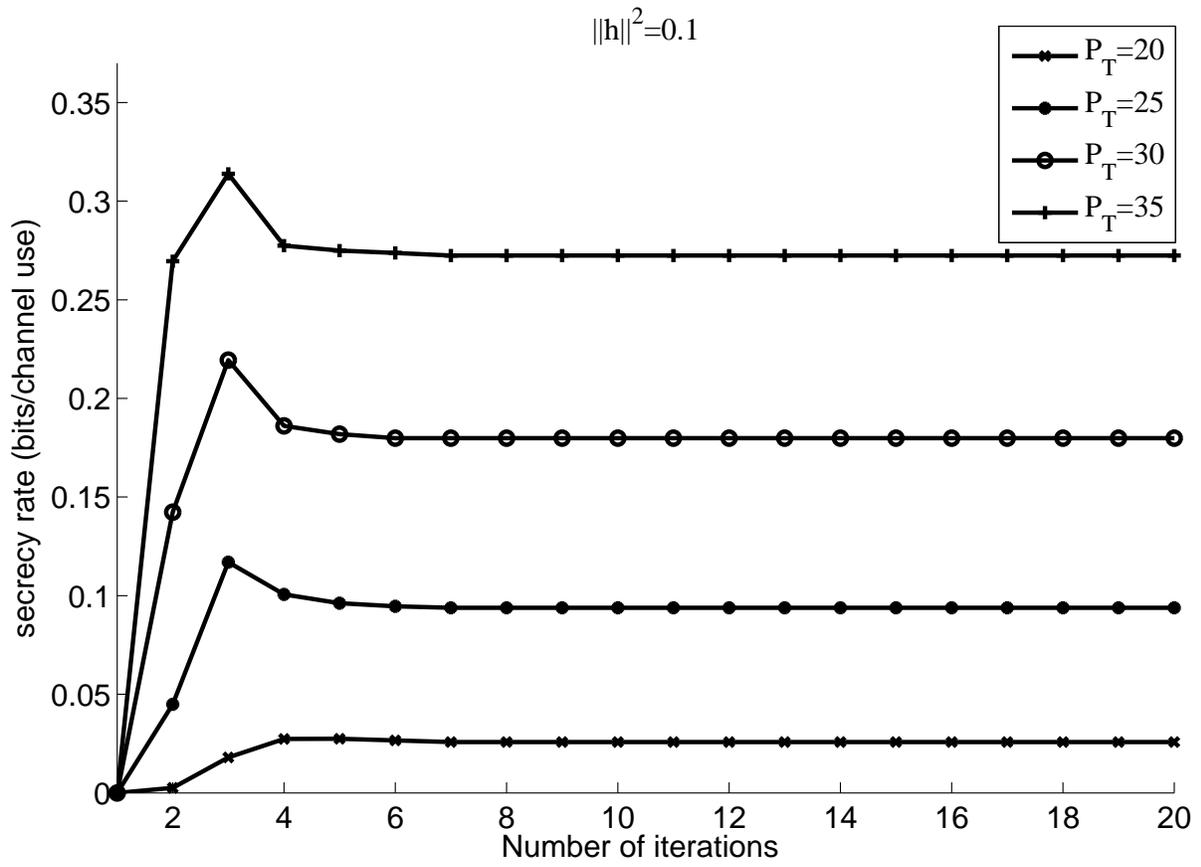 ,width=1\textwidth}
\caption{Secrecy rate versus the number of iteration under
$||\mathbf{h}||^2=0.1$.} \label{Fig_convergence01}
\end{figure}

%\begin{figure}
%\centering \epsfig{file=./ ,width=\textwidth}
%\caption{Secrecy rate versus transmit power under
%$||\mathbf{h}||^2=0.4$ and $n_T=2$.} \label{Fig_h_04}
%\end{figure}
%
%\begin{figure}
%\centering \epsfig{file=./ ,width=\textwidth}
%\caption{Secrecy rate versus transmit power under
%$||\mathbf{h}||^2=0.03$ and $n_T=4$.} \label{Fig_h_003_4}
%\end{figure}

\begin{figure}
\centering \epsfig{file=./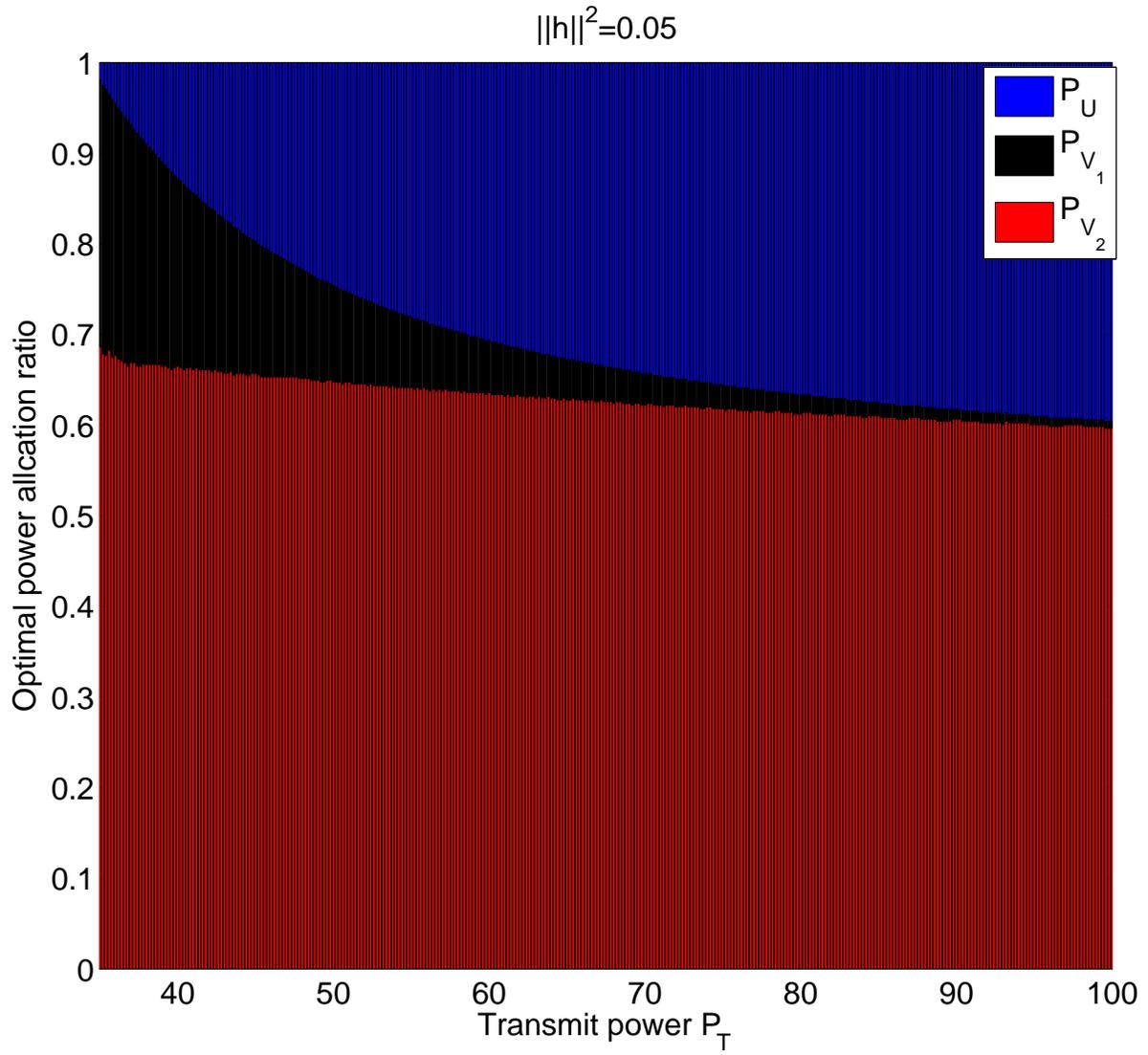 ,width=1\textwidth}
\caption{Power allocation among $P_U$, $P_{V_1}$, and $P_{V_2}$ under
$||\mathbf{h}||^2=0.05$.} \label{Fig_power_allocation_005}
\end{figure}

\end{document}